\newif\ifextendedversion
\newsavebox\warrowsymbol
\savebox\warrowsymbol{\begin{tikzpicture} \draw (0,0) rectangle (.2cm,.2cm); \draw (0,0) -- (.2cm,.2cm); \end{tikzpicture}}
\newcommand\widen{\mathbin{\nabla}}
\newcommand\narrow{\mathbin{\Delta}}
\newcommand\join{\sqcup}
\newcommand\sem[1]{[\![ #1 ]\!]}
\newcommand\ignore[1]{}
\newcommand\Var{\ensuremath{\mathcal{X}}}
\newcommand\Dom{\ensuremath{\mathbb{D}}}
\newcommand\Context{\ensuremath{\mathbb{C}}}
\newcommand\System{\ensuremath{\mathcal{C}}}
\newcommand\Globals{\ensuremath{\mathcal{G}}}
\newcommand\Locals{\ensuremath{\mathcal{L}}}
\newcommand\Solver{\ensuremath{\mathcal{S}}}
\newcommand\Procedures{\ensuremath{\text{Proc}}}
\newcommand\Goblint{%
{\textsc{Goblint}}%
}
\newcommand\dividedsides{\ensuremath{\tau}}
\newcommand\tdside{TD\textsubscript{side}}
\newcommand\slrp{SLR\textsuperscript{+}}
\newcommand{\combineLG}{\cup}
\tikzset{cfgnode/.style={circle,thick,draw,minimum size=.7cm}}
\lstdefinestyle{pseudocaml}{
    language=caml,
    basicstyle=\small\ttfamily,
    breaklines=true,
    columns=fixed,
    escapechar=@,
    frame=single,
    numbers=left,
    numberstyle=\tiny,
    stepnumber=1,
    mathescape=true,
}
  \lstdefinestyle{ccode}{
    language=c,
    basicstyle=\footnotesize\ttfamily,
    breaklines=true,
    frame=single,
    numbers=left,
    numberstyle=\tiny,
    stepnumber=1,
    mathescape=true,
    columns=fixed,
    escapechar=@,
    commentstyle=\color{OliveGreen}\ttfamily,
    morekeywords={foreach}
}
\newtheorem{theorem}{Theorem}[section]
\theoremstyle{definition}
\newtheorem{definition}[theorem]{Definition}
\newtheorem{example}{Example}[section]
\pgfplotsset{compat=1.15}
\newdimen\LineSpace
\tikzset{
    line space/.code={\LineSpace=#1},
    line space=3pt
}
\definecolor{T-Q-B1}{RGB}{68,119,170} 
\definecolor{T-Q-B2}{RGB}{102,204,238}
\definecolor{T-Q-B3}{RGB}{34,136,51}
\definecolor{T-Q-B4}{RGB}{204,187,68}
\definecolor{T-Q-B5}{RGB}{238,102,119}
\definecolor{T-Q-B6}{RGB}{170,51,119}
\pgfplotsset{
  cycle list name=graydiscernable
}
\let\origthelstnumber\thelstnumber
\newcommand*\Suppressnumber{%
  \lst@AddToHook{OnNewLine}{%
    \let\thelstnumber\relax%
     \advance\c@lstnumber-\@ne\relax%
    }%
}
\newcommand*\Reactivatenumber[1]{%
  \setcounter{lstnumber}{\numexpr#1-1\relax}
  \lst@AddToHook{OnNewLine}{%
   \let\thelstnumber\origthelstnumber%
   \refstepcounter{lstnumber}
  }%
}
\colorlet{improved}{green!30}
\colorlet{worse}{red!85}
\colorlet{incomparable}{brown!90!white}
\newlist{questions}{enumerate}{2}
\setlist[questions,1]{label=({\bfseries RQ\arabic*}), ref=({\bfseries RQ\arabic*})}
\setlist[questions,2]{label=(\alph*), ref=\thequestionsi(\alph*)}
\newcommand{\benchmarkprog}[1]{\texttt{#1}}
\newcommand{\refapp}[1]{\ifextendedversion \cref{#1}%
\else supplementary material \labelcref{#1}%
\fi}
\newcommand{\Refapp}[1]{\ifextendedversion \Cref{#1}%
\else Supplementary material \labelcref{#1}%
\fi}
\begin{document}


\title{Taking out the Toxic Trash: Recovering Precision in Mixed Flow-Sensitive Static Analyses}



\author{Fabian Stemmler}
\affiliation{%
  \institution{Technical University of Munich}
  \country{Germany}}
\email{fabian.stemmler@tum.de}

\author{Michael Schwarz}
\affiliation{%
  \institution{Technical University of Munich}
  \country{Germany}}
\email{m.schwarz@tum.de}

\author{Julian Erhard}
\affiliation{%
  \institution{Technical University of Munich}
  \country{Germany}}
  \affiliation{%
  \institution{LMU Munich}
  \country{Germany}}
\email{julian.erhard@tum.de}

\author{Sarah Tilscher}
\affiliation{%
  \institution{Technical University of Munich}
  \country{Germany}}
\affiliation{%
  \institution{LMU Munich}
  \country{Germany}}
\email{sarah.tilscher@tum.de}

\author{Helmut Seidl}
\affiliation{%
  \institution{Technical University of Munich}
  \country{Germany}}
\email{helmut.seidl@tum.de}


\begin{abstract}
  Static analysis of real-world programs combines flow- and context-sensitive analyses of local program states with
computation of flow- and context-insensitive invariants at \emph{globals}, that, e.g., abstract data shared by multiple threads.
The values of locals and globals may mutually depend on each other, with the analysis of local program states both making contributions
to globals and querying their values.
Usually, all contributions to globals are accumulated during fixpoint iteration, with \emph{widening} applied to enforce termination.
Such flow-insensitive information often becomes unnecessarily imprecise and can include superfluous contributions --- \emph{trash} --- which, in turn,
may be \emph{toxic} to the precision of the overall analysis.
To recover precision of globals, we propose techniques complementing each other:
\emph{Narrowing on globals} differentiates contributions by origin;
\emph{reluctant widening} limits the amount of widening applied at globals;
and finally,
\emph{abstract garbage collection} undoes contributions to globals and
propagates their withdrawal.
The experimental evaluation shows that these techniques increase the precision of mixed flow-sensitive analyses at a reasonable cost.

\end{abstract}

\begin{CCSXML}
  <ccs2012>
     <concept>
         <concept_id>10003752.10010124.10010138.10010143</concept_id>
         <concept_desc>Theory of computation~Program analysis</concept_desc>
         <concept_significance>500</concept_significance>
         </concept>
     <concept>
         <concept_id>10003752.10010124.10010138.10010142</concept_id>
         <concept_desc>Theory of computation~Program verification</concept_desc>
         <concept_significance>500</concept_significance>
         </concept>
     <concept>
         <concept_id>10003752.10010124.10010138.10011119</concept_id>
         <concept_desc>Theory of computation~Abstraction</concept_desc>
         <concept_significance>300</concept_significance>
         </concept>
   </ccs2012>
\end{CCSXML}

\ccsdesc[500]{Theory of computation~Program analysis}
\ccsdesc[500]{Theory of computation~Program verification}
\ccsdesc[500]{Theory of computation~Abstraction}

\keywords{static analysis, abstract interpretation, flow-sensitivity, flow-insensitivity}


\maketitle

\section{Introduction}
Abstract interpretation, pioneered by \citet{cousot1977}, underpins many expressive and highly scalable static analyses.
%
Often, abstract values representing invariants are computed for each program point, yielding a \emph{flow}-sensitive analysis.
For a more precise interprocedural analysis,
abstract states at program points may further be differentiated by calling context, making the analysis additionally \emph{context}-sensitive.
Other analyses completely give up on differentiating information by program point and collect only \emph{flow-insensitive} information.

The notion of \emph{partial} flow-sensitivity has been considered as a middle ground,
where some program points are treated flow-sensitively, and others are not \cite{RoyS07,RinetzkyRSY08,ParkLR22}.
However, this term still does not capture approaches that are flow-sensitive w.r.t.\
some aspect of the state but are flow-insensitive for others.
This is, e.g., the case for thread-modular analyses of shared data.
We propose the term \emph{mixed flow-sensitivity} to encompass all of these phenomena.
Mixed flow-sensitivity is employed when some information should be aggregated regardless of origin.
This aggregation of flow-insensitive information happens at so-called \emph{globals}.
Conversely, flow-sensitive information is said to be associated with \emph{locals}, e.g., program points.

Mixed flow-sensitivity enables elegant formulations of analyses and can be a key ingredient for making them tractable.
It arises, e.g., when \emph{global store widening} \cite{shivers1991} --- sometimes considered essential for scalable analyses \cite{DaraisLNH17,JohnsonLMH13} ---
is applied not to the entire state but only to selected parts.
Similarly, an efficient yet precise pointer analysis may track only some points-to sets flow-sensitively~\cite{LhotakC11}.
The choice to switch off flow-sensitivity may be done online, e.g., because memory resources are about to be exhausted \cite{HeoOY19}.
Mixed flow-sensitivity also is the method of choice for analyzing non-local control-flow such as
\texttt{setjmp/longjmp}~\cite{schwarz2023}.
Context-sensitive analysis capturing call strings, value-based contexts, and further abstractions
can also conveniently be expressed~\cite{erhard2024}.
Still, perhaps the most widespread application is overcoming the state explosion encountered in the analysis of multi-threaded programs by analyzing each thread
flow-sensitively, while using globals to flow-insensitively abstract the communication between threads~\cite{vene2003,mine2012,mine2014,Mine2017,Mine2018,Gotsman07,StievenartNMR19,schwarz2023b,vojdani2016,schwarz2021}.
Mixed-flow sensitivity also enables thread-modular analyses of memory safety~\cite{SaanESBHTVS24a}.
Abstract domains for expressing flow-insensitive properties may
come with infinite ascending chains, implying that \emph{widening}~\cite{cousot1977} at globals may be required to ensure termination.
As a result, the abstract values computed at globals may be rather coarse overapproximations.
\begin{example}\label{ex:incdec}
    Consider the analysis of multi-threaded code where globals are used to abstract shared data.
    Listings \ref{lst:incthread} and \ref{lst:decthread} show two threads concurrently incrementing, respectively decrementing, a shared variable \lstinline[style=ccode]|a|.
    Here and in following code examples of concurrent code, we assume that accesses to shared variables are atomic.
\begin{figure}
\begin{minipage}{0.465\linewidth}
\begin{lstlisting}[style=ccode,label=lst:incthread,caption=Incrementing Thread]
int a = 0;
void thread1() {
    while(1) {
        int i = a;
        if(i < 10)
            a = i + 1;
    }
}
\end{lstlisting}
\end{minipage}
\hfill
\begin{minipage}{0.465\linewidth}
\begin{lstlisting}[style=ccode,label=lst:decthread,caption=Decrementing Thread]
void thread2() {
    while(1) {
        int i = a;
        if(i > -10)
            a = i - 1;
    }
}
\end{lstlisting}
\end{minipage}
\end{figure}
    %
    Assume that the program points of the two threads
    are analyzed flow-sensitively, while the value of the variable \lstinline{a} shared between the two threads
    is analyzed flow-insensitively.
    Due to the guards preceding the assignments in both threads, $-10 \leq a \leq 10$ holds throughout the execution
    of both threads.
    Now assume that an interval analysis is performed where initially \lstinline{a} has the abstract value $[0,0]$,
    and termination is enforced by widening.

    For the sequence of writes to \lstinline{a} by the \emph{incrementing} thread in \cref{lst:incthread},
    already the first write causes the abstract value for
    \lstinline{a} to increase to $[0,\infty]$.
    For the sequence of writes to \lstinline{a} by the \emph{decrementing} thread
    from \cref{lst:decthread},
    widening will also give up on the lower bound, i.e.,
    result in $[-\infty,\infty]$.
\end{example}

Even when no widening is applied to globals,
their precision may suffer from contributions later found
to be too large or withdrawn entirely.
This effect is common when the analysis has \emph{non-monotonicities},
e.g., because it uses not only widening for locals, but also \emph{narrowing} to recover lost precision, or simply, because
the analysis is \emph{context-sensitive} where different iterations over the code may refer to different abstract
calling contexts \cite{fecht1999}.
Such outdated contributions may adversely affect the precision of invariants elsewhere, meaning that on
top of being \emph{trash} (i.e., unneeded) they may be \emph{toxic} in worsening precision
elsewhere.

Here, we study general mechanisms to regain lost precision for flow-insensi\-tive properties
in mixed flow-sensi\-tive analyses.
For that, we encapsulate the mechanisms by which mixed flow-sensi\-tive analysis frameworks update the values of globals
by \emph{update rules}.
Update rules are generic and can, with reasonable effort, be integrated into any such framework.
Update rules may have inter\-nal state, e.g., to
track the contributions from locals individually and perform widening and narrow\-ing on these.
In \cref{ex:incdec}, this is key to establishing the invariant $-10 \leq a \leq 10$.
Thus, update rules can make an analyzer more precise --- without necessitating changes to other components.

Our update rules are presented and evaluated in the context of
\emph{side-effecting constraint systems}~\cite{apinis2012} (recalled in \cref{sec:background}).
This formalism allows describing
mixed flow-sensitive analyses and has been used for many of the analyses outlined above.
We introduce the general form of update rules in \cref{sec:update_rules}.
These rules can also be plugged into the frameworks for
mixed flow-sensitivity outlined in the related work (\cref{sec:relatedwork}).
As a starting point, we extract such a rule from a fixpoint solver proposed by \citet{apinis2013} whose effect has never been systematically evaluated (\cref{sec:divideandnarrow}).
The idea there is to track from where a particular global has received contributions,
and then perform widening as well as narrowing on the \emph{join} of the respective values.
That approach turns out to suffer from precision loss due to unnecessary widening already when
different locals each contribute different constant values.
%
We then propose new generic update rules with increasing levels of sophistication.
Our core contributions are as follows:
\begin{itemize}
    \item We propose to
    apply widening and narrowing \emph{per origin}, i.e.,
    to the contributions of locals individually (subsection~\ref{ss:localized});
    \item We propose to \emph{reluctantly} apply widening to prevent
          precision loss when the new contribution of a local is already
	  subsumed by the current value of the global;
    \item To ensure that reluctant widening still enforces a finite number of updates, we introduce
    	  the notion of \emph{strong widening}, a sufficient condition met by common widening
	  strategies such as \emph{threshold widening} (subsection \ref{ss:reluctant});
    \item We furthermore introduce \emph{abstract garbage collection}, a technique to remove \emph{withdrawn} contributions to globals to eliminate irrelevant unknowns and recursively prune further spurious contributions (subsection \ref{ss:garbage}).
\end{itemize}
We have implemented our update rules within the static analysis framework \Goblint{}~\cite{vojdani2016,SaanSAESVV21}
for multi-threaded {C} programs and evaluate the impact on precision and performance (\cref{sec:evaluation}).

\section{Preliminaries}
\label{sec:background}
Abstract interpretation in general considers abstract states from an
abstract domain $\Dom$, representing \emph{properties} of concrete program states.
The set $\Dom$ comes with a partial order $\sqsubseteq$ (corresponding to implication between properties).
Here, we additionally assume $\Dom$ to be a \emph{bounded lattice}.
This means that $\Dom$ comes with a
least element $\bot$ (corresponding to the property \emph{false}),
a greatest element $\top$ (corresponding to the property \emph{true}), and binary operators
\emph{join} for computing the least upper bound (denoted by $\sqcup$) and
\emph{meet} for computing the greatest lower bound (denoted by $\sqcap$).

Assume that the program consists of a finite set $\Procedures$ of procedures where each procedure $p\in\Procedures$ is given by
a control flow graph $\mathcal{G}_p = (N_p, E_p)$.
The set $N_p$ collects the program points of $p$ including $\textsf{st}_p$ and $\textsf{ret}_p$, the (unique) start and return points
of procedure $p$. Each edge $e=(u,a,v)$ in $E_p \subseteq N_p \times L \times N_p$ consists of
a start point $u$ and an end point $v$, together with a label $a\in L$.
Here, we also assume that for each procedure $p$, $\textsf{ret}_p$ is reachable in $\mathcal{G}_p$ from $\textsf{st}_p$.

The set of labels of edges consists
of all actions possibly executed by the program, i.e.,
basic statements of the source language such as assignments, as well as guards. 
For interprocedural analysis, also \emph{call edges} with labels $f(e_1,\ldots,e_k)$
where $e_1,\ldots,e_k$ are side-effect free expressions not containing global variables.
Program execution starts with a call to the procedure $\textsf{main}\in\Procedures$.

For conveniently representing the abstract semantics of mixed flow-sensitive analyses,
we chose \emph{side-effecting} constraint systems \cite{vene2003,apinis2012}.
We refer to the variables of the constraint system for which abstract values are to be computed as \emph{unknowns}.
This set of unknowns is given as the union of two disjoint sets $\Globals$ and $\Locals$.
$\Globals$ encompasses all unknowns for which abstract states are collected \emph{flow-insensitively} --- referred to as \emph{globals}, whereas $\Locals$ contains unknowns which are to be analyzed \emph{flow-sensitively} --- referred to as \emph{locals}.
This framework may serve as a foundation for all flavors of mixed flow-sensitive analyses as outlined in the introduction.
We briefly sketch two instantiations which we refer back to throughout this work.
In \cref{ex:incdec} globals were introduced for abstracting data shared between multiple threads.
In the literature, globals have been used to abstract individual pieces of data \cite{vene2003,StievenartNMR19,mine2012,vojdani2016} as well as clusters yielding relational information \cite{schwarz2023b}.
On the other hand, globals may also be introduced for context-sensitive analysis of procedures, where they correspond to
pairs  $(\textsf{st}_p, c)$ of \emph{start points} $\textsf{st}_p$ of procedures $p$
and contexts $c$ from some set $\Context$ of \emph{abstract calling contexts} the analysis discriminates.
The unknown $(\textsf{st}_p,c)$ then collects the abstract values passed to $p$ from all reached call sites of $p$
for the calling context $c$ \cite{apinis2012}.
%
In context-sensitive analyses,
the set $\Locals$ of \emph{local} unknowns are pairs $(v, c)$ of program points $v$ (different from
the start point of the respective procedure) and contexts $c$.
For each local unknown, the constraint system provides constraints.
%
For interprocedural analysis,
each edge $e = (u, a, v)$ provides for every context $c$ an abstract value which must be subsumed by the control-flow successor
$v$ in context $c$, i.e., by the unknown $(v,c)$.
Depending on the label $a$, additional contributions to globals may be triggered:
\begin{itemize}
\item	Assume $a$ is an assignment to the global variable $g$ which is analyzed flow-insensitively.
	Then, a contribution to the unknown corresponding to $g$ is caused. The contribution is the abstract value of the right-hand side of the assignment;
\item	Assume $a$ is a call of some procedure $p$ with abstract calling context $c'$.
    Then, the abstract state computed for the entry of $p$ is contributed to the unknown $(\textsf{st}_p,c')$.
\end{itemize}
Thus, the abstract value for a local as well as the generated side-effects depend on the abstract value
at its control-flow predecessor.
In case of a call to procedure $p$ with context $c'$, also the unknown $(\textsf{ret}_p,c')$ for the
return point $\textsf{ret}_p$ of $p$ is accessed.
Furthermore, values of global unknowns corresponding to shared global variables may be queried.
Altogether, the constraint for the control-flow edge $e$ leading to some node $v$ in context $c$ thus takes the form:
\begin{equation}
(\sigma\,(v,c), \rho)  \sqsupseteq \sem{e,c}^\sharp\,(\sigma\combineLG\rho)
\label{def:edge_constraint}
\end{equation}
where $\sigma:\Locals\to\Dom$, $\rho:\Globals\to\Dom$ map locals and globals to abstract values, respectively,
and $\combineLG$ denotes the combination of the two mappings into one with domain $\Locals\cup\Globals$.
Accordingly, the right-hand side
is a function of type
$\sem{e,c}^\sharp:((\Locals\cup\Globals)\to\Dom)\to(\Dom\times(\Globals\to\Dom)).$
There, the first component of the result is the contribution to the local control-flow
successor and the second describes the encountered contributions to globals.
We assume that all abstract functions $\sem{e,c}^\sharp$ are \emph{strict} in the control-flow
predecessor $(u,c)$, i.e., $\sem{e,c}^\sharp(\sigma\combineLG\rho) = (\bot,\emptyset)$ whenever
$(\sigma\combineLG\rho)\,(u,c) = \bot$ and triggers non-$\bot$ contributions to a finite set of globals only.
Thus, the contributions to globals can be represented by a finite set of tuples
$\{(g_1, d_1),\ldots,(g_r,d_r)\}$ with the understanding that globals not mentioned in the enumeration do not receive a contribution.

In addition to the constraints for edges, the analysis requires an initial value
$d_0\in\Dom$ at the start point of
\text{main} for some initial calling context $c_0$.
Therefore,
we introduce one further constraint
\begin{equation}
(\sigma\,\_\text{main},\rho)\sqsupseteq (\sigma\,(\textsf{ret}_{\text{main}},c_0),\rho_0)
\label{def:init_constraint}
\end{equation}
for a dedicated local $\_\text{main}$.
The right-hand side returns the value of the end point $\textsf{ret}_{\text{main}}$ of \text{main} in the context $c_0$
and contributes some initial abstract values to globals.
In particular, $((\textsf{st}_{\text{main}},c_0), d_0)\in\rho_0$.
The constraint system for a small example program is provided in \refapp{app:fac}.

A pair $(\sigma,\rho)$ of assignments of abstract values from $\Dom$ to locals and globals is called
\emph{solution} if it satisfies all given constraints.
The solution is \emph{finite} if $\sigma\,x \neq\bot$ and $\rho\,y\neq\bot$ only for finitely many unknowns $x\in\Locals$ and
$y\in\Globals$.
Technically, we equivalently collect all constraints for a given local $x$ into a single right-hand side
\begin{equation}
(\sigma\,x,\rho)\;\sqsupseteq f_x\,(\sigma\combineLG\rho)
\label{def:single_constraint}
\end{equation}
In case that $x \equiv \_\text{main}$,
$f_x\,(\sigma\combineLG\rho) = (\sigma\,(\textsf{ret}_{\text{main}},c_0),\rho_0)$, and otherwise
for $x\equiv (v,c)$,
$f_x$ is the least upper bound of all functions $\sem{e,c}^\sharp$, where $e$ is a control-flow edge with end point $v$.
We refer to the collection of constraints \eqref{def:single_constraint}
as the \emph{side-effecting} constraint system $\System$ for the program.

Depending on the set $\Context$ of calling contexts distinguished by interprocedural analysis,
the constraint system $\System$ may be very large if not infinite.
Still, finite solutions of $\System$ often do exist and 
can be found by
\emph{local} solvers, such as \slrp{} \cite{apinis2013} or \tdside{} \cite{seidl2021}.
Instead of solving the entire constraint system, such solvers attempt to
provide non-$\bot$ values for just enough unknowns to determine the value of one unknown of \emph{interest}.
In the setting of program analysis, this unknown of interest is the dedicated local $\_\text{main}$.
Querying this unknown will cause the local solver to determine abstract values for all unknowns
corresponding to the program points in appropriate abstract calling contexts so that all concrete
program executions are covered~\cite{seidl2021}.
Even for finite sets of encountered unknowns, though,
the analysis may not terminate due to
infinite ascending
chains in the domain $\Dom$.
To enforce termination, we therefore rely on \emph{widening} and \emph{narrowing} \cite{cousot1977,cousot1992}.
\begin{definition}\label{def:widen}
    An operator $\widen$ is a \emph{widening operator}, if it subsumes the join operator, i.e., $a \join b \sqsubseteq a \widen b$
    for all $a,b\in\Dom$; and, for every sequence $b_i\in\Dom$, $i\geq 1$ and $a_0\in\Dom$,
    the sequence $a_i\in\Dom, i\geq 1,$ defined by
    $a_i = a_{i-1} \widen b_i$, is ultimately stable.
    Generally, we assume that $\bot\widen b = b$.
    \footnote{We build on the \emph{original} definition of widening by \citet{cousot1977}.
    \citet{CousotC92Comparing} later reused the term for a weaker notion.
    \citet{CortesiZ11} refer to the original definition as \emph{strong} widening.
    Our strong widening proposed in \cref{def:strongwiden} is a stronger version of the original definition.
    }
\end{definition}
Analyses with aggressive widening often terminate quickly, but lose significant amounts of precision.
In some cases, the precision can be recovered by subsequent narrowing.
\begin{definition}
    An operator $\narrow$ is a \emph{narrowing operator}, if $a\sqcap b \sqsubseteq a\narrow b\sqsubseteq a$ for all
    $a, b \in \Dom$,
    and, for every sequence $b_i\in\Dom$, $i\geq 1$ and $a_0\in\Dom$,
    the sequence $a_i\in\Dom, i\geq 1,$ defined by
    $a_i = a_{i-1} \narrow b_i$, is ultimately stable.
    Moreover, we generally assume that $a\narrow\bot = \bot$.
\end{definition}

\begin{example}\label{ex:interval}
    Consider the interval domain extended with $\bot$ where $\top=\relax[-\infty,\infty]$ with the usual order and the classic widening and narrowing.
    For non-$\bot$ cases, it is given by
    \[
\begin{array}{lll}
    \relax[ l_a,u_a] \widen [l_b,u_b] &=& [l_a \leq l_b\;?\;l_a : -\infty,\quad u_a \geq u_b\;?\;u_a : \infty ]\\
    \relax[ l_a,u_a] \narrow [l_b,u_b] &=& [l_a \neq -\infty\;?\;l_a : l_b,\quad u_a \neq \infty\;?\;u_a : u_b ]
\end{array}
    \]
    This definition guarantees that chains become ultimately stable as changed bounds are forgotten for widening, and narrowing only improves infinite bounds.
\end{example}

\section{Update Rules}
\label{sec:update_rules}
Recovering precision is a common challenge across mixed-flow sensitive analysis frameworks.
We thus aim for a \emph{generic} solution, that is not bogged down by incidental details of frameworks and their solver mechanisms ---
applying equally to solvers for side-effecting constraint systems ~\cite{apinis2013,seidl2021}, solvers 
formulated as nested fixpoints in the style of~\citet{mine2012} or \citet{StievenartNMR19}, and solvers for blackboard architectures~\cite{keidel2024}.
To this end, we distill the handling of globals into \emph{update rules} which are \emph{decoupled} from any given analysis framework and its solver:
We assume that this hosting solver maintains a mapping $\rho: \Globals \rightarrow \Dom$ to store the current values
of globals.
Contributions to globals are not directly collected in $\rho$. Instead, whenever contributions are to be made,
a function \lstinline{update_globals} is called which returns a set of globals together with their new values.
The solver then updates the values of globals in $\rho$ accordingly.
Such a function \lstinline{update_globals} is called an \emph{update rule} and receives as arguments
\begin{enumerate}
    \item the local \lstinline{orig} making the contributions;
    \item the set \lstinline{contribs} of pairs $(g,b)$ of new
    contributions $b$ to globals $g$ where each global appears at most once, and finally;
    \item the map $\rho$ from globals to their current values.
\end{enumerate}
Requiring \lstinline{contribs} to contain at most one contribution per global does, in fact, not limit expressivity:
If multiple contributions are to be made to the same global,
the hosting solver can simply combine them via \emph{join}.
In case of a solver for side-effecting constraint systems, the function \lstinline{update_globals}
is called after every evaluation of a right-hand side.

\begin{example}
	Consider as a first example the update rule given in \cref{lst:always_join}.
	For each contribution $b$ to a global $g$,
	\lstinline{update_globals} checks whether $b$ is currently
	already subsumed by $\rho\,g$. If it is, no update is necessary. Otherwise,
	the value $b$ is joined into $\rho\,g$ to obtain the new value for $g$.
	The resulting set of updates is returned.
\end{example}

To arrive at a modular statement of soundness for combinations of solvers and update rules, we define requirements for both:
At a high level, we require the solver to not go wrong as long as $\rho$ always accounts for all last contributions.
Observe that the current value of $\rho$ for some global $g$ is only required to subsume the \emph{last}
contribution of each origin to $g$ --- and not \emph{all}. This enables later contributions to $g$ to
shrink the value of $\rho$ for $g$.
As long as the update rule then is \emph{sound}, i.e., it ensures that each last contribution to a global $g$ is subsumed by the produced updates,
the solver with the update rule remains sound.
More formally:
\begin{definition}[Hosting Solver]
%
%
We require that a \emph{hosting solver} $\Solver$ guarantees:
\begin{enumerate}[label={\bfseries(S\arabic*)}, leftmargin=1cm]
\item \label{prop:solver-call-seq} Each run of $\Solver$ performs a sequence of calls
	\lstinline{update_globals}$(o_i,C_i,\rho_i),i\geq 0,$
	with origins $o_i$, finite sets $C_i\subseteq \Globals\times\Dom$ of contributions to globals,
	and mappings $\rho_i:\Globals\to\Dom$. Each call
	returns a set $U_i\subseteq\Globals\times\Dom$
	of updates to globals so that for each $i\geq 0$,
	$\rho_{i+1} = \rho_i\oplus\{g\mapsto d\mid (g,d)\in U_i\}$.
\item \label{prop:solver-sound} The result of a run performing the sequence \lstinline{update_globals}$(o_i,C_i,\rho_i),i= 0,\ldots n-1,$ of updates
	is \emph{sound} provided that for each $j=1,\ldots,n$, each global $g$ and each origin $o$
	contributing to $g$, $\rho_j$ subsumes the \emph{latest} contribution of $o$ to $g$, i.e., if there is some $i < j$
	such that
	(1)  $o_i = o$ with $(g,d)\in C_i$, and
	(2)   $o_{i'}\neq o$ for all $i<i'<j$,
	then $d\sqsubseteq\rho_j\,g$.
\end{enumerate}
\end{definition}
\begin{definition}[Sound Update Rule]
We call an update rule \lstinline{update_globals} \emph{sound} if
for each sequence \lstinline{update_globals}$(o_i,C_i,\rho_i), i\geq 0$, of calls returning sets $U_i$ of updates
with $\rho_{i+1} = \rho_i\oplus\{g\mapsto d\mid (g,d)\in U_i\}$
and each $j$-th call \lstinline{update_globals}$(o_j,C_j,\rho_j)$ in this sequence, the following two properties are met:
\begin{enumerate}[label={\bfseries(R\arabic*)}, leftmargin=1cm]
\item \label{prop:update-pair-exists}
If $(g, b)\in C_j$ where $b \not\sqsubseteq \rho_j\,g$ holds,
then $U_j$ has a pair $(g,\_)$.
%
\item \label{prop:update-last-accounted-for}
For each origin $o$, let $i$ be the index of the last \lstinline{update_globals} call for $o$ in the sequence ending with the $j$th call.
Then for all  $(g,d)\in U_j$, if $(g,d_i)\in C_i$ also $d_i\sqsubseteq d$.
\end{enumerate}
\end{definition}

\begin{theorem}\label{t:soundness}
A \emph{hosting solver} 
using a sound update rule returns sound results when it terminates.
\end{theorem}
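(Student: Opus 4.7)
The plan is to reduce the hosting solver's soundness precondition \ref{prop:solver-sound} to the update rule's guarantees \ref{prop:update-pair-exists} and \ref{prop:update-last-accounted-for}. By \ref{prop:solver-sound}, it suffices to show the following invariant: for every step index $j \geq 1$, every global $g$, and every origin $o$ whose latest \lstinline{update_globals} call with a contribution to $g$ occurred at some step $i < j$ with contribution $d$ (i.e., $o_i = o$, $(g,d) \in C_i$, and $o_{i'} \neq o$ for $i < i' < j$), we have $d \sqsubseteq \rho_j\, g$.

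Fixing such $j, g, o, i, d$, I would proceed by induction on $k$ ranging over $i, i+1, \ldots, j-1$, establishing $d \sqsubseteq \rho_{k+1}\, g$; instantiating the invariant at $k = j-1$ then gives the desired inclusion. For the base case $k = i$, I split on whether a pair for $g$ appears in $U_i$. If it does not, then by \ref{prop:update-pair-exists} the contribution must already satisfy $d \sqsubseteq \rho_i\, g$, and since $\rho_{i+1}\, g = \rho_i\, g$ by \ref{prop:solver-call-seq}, we are done. Otherwise, some $(g, d') \in U_i$, and applying \ref{prop:update-last-accounted-for} at step $i$ to origin $o$ itself (whose last call in the prefix ending at $i$ is $i$) yields $d \sqsubseteq d'$, so $\rho_{i+1}\, g = d' \sqsupseteq d$.

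For the inductive step, suppose $d \sqsubseteq \rho_k\, g$ for some $i \leq k < j - 1$ and consider the $(k+1)$th call. Because $i$ is by assumption the last call for $o$ before $j$, the last call for $o$ in the prefix ending at $k+1$ is still $i$, with $(g, d) \in C_i$. If $U_{k+1}$ has no pair for $g$, then $\rho_{k+2}\, g = \rho_{k+1}\, g$; combined with the inductive hypothesis and an auxiliary monotonicity step (or by shifting indices so the inductive hypothesis directly gives $d \sqsubseteq \rho_{k+1}\, g$), we conclude $d \sqsubseteq \rho_{k+2}\, g$. Otherwise, $(g, d') \in U_{k+1}$ for some $d'$, and \ref{prop:update-last-accounted-for} applied at step $k+1$ with origin $o$ gives $d \sqsubseteq d'$, hence $d \sqsubseteq \rho_{k+2}\, g$.

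The subtle point, which I expect to be the main obstacle to present cleanly, is that $\rho\, g$ is in general \emph{non-monotonic} across steps: later updates (e.g., those arising from narrowing in subsequent sections) may shrink it. Thus one cannot naively argue that the value of $\rho\, g$ only grows between $o$'s contribution at step $i$ and step $j$. The whole argument therefore rests on \ref{prop:update-last-accounted-for}, which guarantees that whenever \emph{any} update to $g$ is produced, it simultaneously covers the latest contribution of \emph{every} origin that has ever contributed to $g$ --- including $o$, whose last call remains $i$ throughout the interval $(i, j)$. Identifying \ref{prop:update-last-accounted-for} as precisely the right invariant-preserving lever, and \ref{prop:update-pair-exists} as the trigger ensuring a legal update exists in the base case, is the heart of the proof.
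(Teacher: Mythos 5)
Your proposal is correct and takes essentially the same approach as the paper: both arguments reduce soundness to the precondition of \textbf{(S2)} and establish, by induction over the sequence of \lstinline{update_globals} calls, that $\rho$ always subsumes each origin's latest contribution, with \textbf{(R1)} forcing an update whenever a contribution is not yet subsumed and \textbf{(R2)} guaranteeing that any produced update covers the latest contribution of every origin (which is exactly what handles the non-monotonicity of $\rho\,g$). Your pointwise induction over the steps from $i$ to $j$ for a fixed contribution is merely a localized restatement of the paper's induction on the number of updates, which maintains the same invariant for all origins and globals simultaneously.
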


\begin{proof}
We show that during each run of a hosting solver $\Solver$ with a suitable update rule plugged in,
the assumption of property \labelcref{prop:solver-sound} is met, i.e., for each global $g$, the value
$\rho\,g$ always subsumes the latest contribution from $o$
for every encountered origin $o$.
Assume that  \lstinline{update_globals}$(o_i,C_i,\rho_i), i = 0,\ldots,n-1$, is a sequence of calls
corresponding to a run of the solver $\Solver$.
The proof is by induction on the number $j$ of updates already performed during that run.
First assume that $j=0$, i.e., no call to \lstinline{update_globals}
has been executed yet.
Therefore, the assertion for $\rho_0$ is trivially met.
Now assume that assertion for $\rho$ holds for the first $j-1$ calls to \lstinline{update_globals}, and
let $U$ denote the set of updates returned by \lstinline{update_globals}$(o_j,C_j,\rho_j)$.
By property \labelcref{prop:solver-call-seq},
$\rho_{j+1}=\rho_j\oplus\{g'\mapsto d'\mid (g',d')\in U\}$.
Consider any pair $(g,d)\in C_j$.
If $d\sqsubseteq\rho_j\,g$ and there is no pair $(g,\_)\in U$, then $d\sqsubseteq\rho_j\,g = \rho_{j+1}\,g$ (by \labelcref{prop:solver-call-seq})
and by induction hypothesis the assertion holds.
If, on the other hand, $d\not\sqsubseteq\rho_j\,g$, there is a pair $(g,\_)\in U$ by \labelcref{prop:update-pair-exists}.
Thus, for all globals appearing in $C_j$, either the property holds by induction, or an update is produced.

Consider now the set $U$ of all updates, whether for a global in $C_j$ or not.
For a pair $(g,d)\in U$, by \labelcref{prop:update-last-accounted-for}, $d$ must subsume the latest contribution from any origin $o$, and as by \labelcref{prop:solver-call-seq},
$\rho_{j+1}\,g = d$, the assertion follows for $g$.
Now, consider a global $g$ which neither appears in $C_j$ nor in $U$. For these, the assertion follows from the induction hypothesis as no new contributions occur
and $\rho_{j+1}\,g=\rho_j\,g$ (by \labelcref{prop:solver-call-seq}) in this case.
From that the assertion follows.
\end{proof}
Consider again the update rule given in \cref{lst:always_join}.
This update rule certainly satisfies properties \labelcref{prop:update-pair-exists} and \labelcref{prop:update-last-accounted-for} and therefore is sound.
However, it has a severe drawback: accumulation of values by the hosting solver at globals will often
fail to terminate when the domain $\Dom$ has infinite ascending chains.
One natural-seeming remedy is replacing the operator $\sqcup$ in line   
5 with a widening operator $\widen$.
This corresponds to the still sound update rule used in \cref{ex:incdec}.
Using widening to combine the old value of a global $g$ with new, non-subsumed, contributions
guarantees that the number of updates to $g$ is finite.

\begin{figure}
    \begin{lstlisting}[style=ccode,caption={\label{lst:always_join}Update rule for globals to handle contributions triggered by right-hand sides by always joining.}]
update_globals($\Locals$ orig, $2^{\Globals\times\Dom}$ contribs, $\Globals\to\Dom$ $\rho$) {
  updates = $\emptyset$; // Updates to be propagated to the solver
  foreach((g,b) $\in$ contribs) { // Iterate over contributions
    if(b $\sqsubseteq$ $\rho$ g) continue; // Nothing to do
    updates = updates $\cup$ {(g, $\rho$ g $\sqcup$ b)}; // Compute new value by join
  }
  return updates;
}
    \end{lstlisting}
\end{figure}

\section{Precision Recovery}
\label{sec:divideandnarrow}
When a fixpoint solver performs narrowing at locals, contributions to globals may \emph{shrink} or
even disappear during fixpoint iteration.
Even without narrowing, this can happen due to non-monotonic right-hand sides.
%
%
The update rule~\ref{lst:always_join}, with or without $\widen$, however,
cannot recover precision when contributions shrink or disappear.
Using $\narrow$ to combine the old value of a global $g$ with new contributions $b$ when $b\sqsubseteq\rho\,g$
is not viable:
The updated value for $g$ need not only subsume the latest contribution provided by a \emph{single}
local --- but the latest contributions provided by \emph{all} locals.

%
To remedy this deficiency, we propose the update rule in \cref{lst:updaterule_imp}.
This update rule internally maintains a data structure \lstinline|cmap|
which for each global $g$ and local \lstinline{orig} separately records
the latest contribution to $g$ originating from \lstinline|orig|.
The data structure \lstinline|cmap| is \lstinline|static| in the {C} sense: its value survives across invocations of the update rule,
but as its scope is limited to inside the update rule, it cannot be accessed or modified from outside.
We demand that this hashmap \lstinline|cmap| is initialized to $\bot$ everywhere when solving begins,
and that it only represents non-$\bot$ values explicitly.
%
By induction on the sequence of evaluations of right-hand sides,
we verify that indeed, only the \emph{latest} contribution from each local is recorded.
The contributions from all locals are combined via join to determine the new value $d$ for a global $g$.
By construction, this update rule is thus sound as well.

\begin{figure}
    \begin{lstlisting}[style=ccode,caption={\label{lst:updaterule_imp}Update rule for globals to handle contributions triggered by right-hand sides
      keeping contributions separate and thus allowing shrinking contributions.
      The semantics of \lstinline|static| is borrowed from {C}: The value of \lstinline|cmap| survives across invocations, but the scope of \lstinline|cmap| is limited to the function.}]
update_globals($\Locals$ orig, $2^{\Globals\times\Dom}$ contribs, $\Globals\to\Dom$ $\rho$) {
  static cmap; // Hashmap from $\Globals$ and $\Locals$ to $\Dom$ with default value $\bot$

  updates = $\emptyset$; // Changes to be propagated to the solver
  foreach((g,b) $\in$ contribs) { // Iterate over contributions
    cmap[g][orig] = b; // Record contribution
    d = $\bigsqcup_{\ell \in \Locals}$ cmap[g][l]; // Compute new value
    if(d != $\rho$ g) updates = updates $\cup$ {(g,d)};
  }
  return updates;
}
    \end{lstlisting}
\end{figure}

\ignore{
Since the individual contributions from locals onto each global are maintained separately,
the implementation of \lstinline{update_globals} updates the contribution from
each local $x$ separately and thus may replace its former (possibly too large) contribution to a global
when it is outdated.
}

Splitting contributions by their origin also is done in \citet{apinis2013} 
where it is deeply baked into the \slrp\ fixpoint engine.
After abstracting away implementation details 
and their host solver and re-casting as an update rule, we obtain \cref{lst:apinis-simple_imp}.
To deal with potential non-termination due to infinite ascending chains in
$\Dom$, they apply widening to combine the newly computed update for a global $g$
with the previous value for $g$.
Moreover, since the new value $b_\sqcup$ accounts for the latest contributions to $g$ of
\emph{all} locals, \emph{narrowing} can be applied whenever $b_\sqcup$ is subsumed by
$\rho\,g$.
%

The resulting update rule is thus sound.
One major drawback of this approach, though, is that distinct constant contributions
from several locals may result in unnecessary
widening. 

    \begin{figure}
        \begin{lstlisting}[style=ccode,caption={\label{lst:apinis-simple_imp}Updating globals using widening and narrowing --- extracted from {\cite[Section 6]{apinis2013}}.
        Equality is defined in terms of the lattice order and thus \lstinline|a == b| is shorthand for $a \sqsubseteq b \wedge b \sqsubseteq a$.}]
update_globals($\Locals$ orig, $2^{\Globals\times\Dom}$ contribs, $\Globals\to\Dom$ $\rho$) {
  static cmap; // Hashmap from $\Globals$ and $\Locals$ to $\Dom$ with default value $\bot$

  updates = $\emptyset$; // Updates to be propagated to the solver
  foreach((g,b) $\in$ contribs) { // Iterate over contributions
    if(cmap[g][orig] == b) continue;
    cmap[g][orig] = b; // Record contribution
    a = $\rho$ g;
    b$_\sqcup$ = $\bigsqcup_{\ell \in \Locals}$ cmap[g][l];
@
\begin{tikzpicture}[remember picture,overlay]
    \highlight{a47}{b47};
\end{tikzpicture}
\tikzmark{a47}
@    d = if(b$_\sqcup$ $\sqsubseteq$ a) { a $\narrow$ b$_\sqcup$ } else { a $\widen$ b$_\sqcup$ };@\hspace*{\fill}\tikzmark{b47}@
    if(d != $\rho$ g) updates = updates $\cup$ {(g,d)};
  }
  return updates;
}
        \end{lstlisting}
    \end{figure}

\begin{wrapfloat}{lstlisting}{r}{0.35\textwidth}
    \vspace{-0.8\baselineskip}
    \setlength{\abovecaptionskip}{0em}
    \setlength{\belowcaptionskip}{0.2\baselineskip}
    \begin{center}
    \begin{lstlisting}[style=ccode,xleftmargin=5.0ex,caption={\label{lst:example-global-update}Program modifying a shared variable.}]
int a = 0;
int main() {
  a = 1;
  assert(a < 2);
}
    \end{lstlisting}
    \end{center}
\end{wrapfloat}

\begin{example}\label{ex:apinis}
    Consider an interval analysis of the program in \cref{lst:example-global-update}.
    Assume that the value of variable \lstinline{a} is analyzed flow-insensitively.
    This global then receives contributions $[0,0]$ and $[1,1]$.
    When widening is performed directly for globals, we obtain $[0,\infty]$ for
    \lstinline{a} -- given that line 1 is processed by the solver first.
    Then, the assertion in line 4 cannot be shown.
    We remark that while narrowing could, in principle, help recover precision here,
    the update rule extracted from \cite{apinis2013} (\cref{lst:apinis-simple_imp}),
    does not apply narrowing in this instance.
    Since the individual contributions remain constant, the check in line 5 of the update rule holds true and leads to the constant contributions not being considered for narrowing.

    \end{example}

\noindent
Distinct contributions from several locals to the same global $g$ are common, e.g.,
for the analysis of multi-threaded programs, where each (potential) write to a global
variable triggers a contribution to at least one global.
Therefore, we propose performing widening and narrowing \emph{per origin}, i.e.,
for the contributions of each local unknown to $g$ separately.
\subsection{Localized Widening and Narrowing}\label{ss:local}\label{ss:localized}
%

Instead of over the joined values, as done in \cref{lst:apinis-simple_imp},
we propose to perform both widening and narrowing
to the contributions from each local \lstinline|orig| to $g$ separately
(\cref{lst:update_globals_warrow_imp}).
For the interval analysis of
the program from \cref{ex:apinis}, we find that
all contributions to \lstinline|a| originate from distinct locals.
Applying the update rule from \cref{lst:update_globals_warrow_imp}
therefore neither applies widening nor narrowing.
The abstract value found for \lstinline{a}, thus, is $[0,1]$, and the assertion is proven.
The impact of narrowing is illustrated in the subsequent example.

\begin{wrapfloat}{lstlisting}{Tr!}{0.35\textwidth}
\vspace{-0.8\baselineskip}
\setlength{\abovecaptionskip}{0em}
\setlength{\belowcaptionskip}{0.2\baselineskip}
\begin{center}
\begin{lstlisting}[style=ccode, xleftmargin=5.0ex,caption={\label{lst:fact}Factorial program.}]
int t = 1;
void fac(int i) {
  if (i > 0) {
    fac(i-1);
    t = i * t;
  }
  else assert(i == 0);
}
void main() {
  int i = 17;
  fac(i);
}
\end{lstlisting}
\end{center}
\end{wrapfloat}

\begin{figure}[b]
    \begin{lstlisting}[style=ccode,caption={\label{lst:update_globals_warrow_imp}Updating globals using localized widening and narrowing.}]
update_globals($\Locals$ orig, $2^{\Globals\times\Dom}$ contribs, $\Globals\to\Dom$ $\rho$) {
  static cmap; // Hashmap from $\Globals$ and $\Locals$ to $\Dom$ with default value $\bot$

  updates = $\emptyset$; // Updates to be propagated to the solver
  foreach((g,b) $\in$ contribs) { // Iterate over contributions
    a = cmap[g][orig];
    if(a == b) continue; // Value unmodified
@
\begin{tikzpicture}[remember picture,overlay]
    \highlight{a18}{b18};
\end{tikzpicture}
\tikzmark{a18}
@    a = if(b $\sqsubseteq$ a) { a $\narrow$ b } else { a $\widen$ b };
    cmap[g][orig] = a;@\hspace*{\fill}\tikzmark{b18}@
    d = $\bigsqcup_{\ell \in \Locals}$ cmap[g][l]; // Compute new value
    if(d != $\rho$ g) updates = updates $\cup$ {(g,d)};
  }
  return updates;
}
    \end{lstlisting}
\end{figure}

\begin{example} \label{ex:nocontext}
Consider an interprocedural interval analysis of the recursive factorial shown in \cref{lst:fact}.
We use \emph{global-store widening} for the program variable $t$, and
widening and narrowing operators as recapped in \cref{ex:interval}.
Since the only local variable in the program is \lstinline{i}, we also use plain intervals to represent local
states of the program.



    Assume that the context to differentiate calls is given by the abstract interval value of the parameter
    at the call-site.
    Then for each context given by an interval $c$, \emph{every} contribution to the global
    $(\textsf{st}_{\text{fac}}, c)$ is equal to $c$.
    Thus, neither widening nor narrowing is applied.
    Such \emph{full} context analysis may be expensive or even
    cause non-termination, in particular for recursive programs~\cite{erhard2024}.

    Consider instead an analysis of the same program without any context, i.e., with $\Context = \{ \bullet  \}$.
    The initial call \lstinline|fac(i)| in \text{main},
    and the subsequent recursive calls all produce contributions to the single unknown
    $(\textsf{st}_{\text{fac}}, \bullet)$.
    Specifically, successive iterations of the recursive call edge cause the contributions
    $(\textsf{st}_{\text{fac}}, \bullet) \mapsto[16,16]$ and
    $(\textsf{st}_{\text{fac}}, \bullet) \mapsto[15,16]$ 
    --- implying that widening is applied.
    At this point, the analysis obtains $[-\infty,16]$
    as the contribution from that call site to the global $(\textsf{st}_{\text{fac}}, \bullet)$, and altogether
    $\rho\,(\textsf{st}_{\text{fac}}, \bullet) = [-\infty,17]$.
    With this overapproximation, the assertion in line 6 cannot be proven.
    Narrowing will recover precision:
    As the guard \verb|i > 0| establishes a lower bound, the next iteration on the procedure body will trigger the contribution
    $\{(\textsf{st}_{\text{fac}}, \bullet) \mapsto[0,16]\}$ so that the contribution
    of line 3 is improved to
    $\{(\textsf{st}_{\text{fac}}, \bullet) \mapsto[0,16]\}$ ---
    and the assertion can be proven.
    A complete specification of the constraint systems for
    context-sensitive as well as context-insensitive analysis of this program is given in \refapp{app:fac}.
\end{example}

This update rule (\cref{lst:update_globals_warrow_imp}), as will all other enhancements of the base update rule of \cref{lst:updaterule_imp}
preserve soundness. However, there is another important issue:
The use of widening and narrowing alone does not guarantee the number of updates to a global to be finite
for update rules \ref{lst:apinis-simple_imp} and \ref{lst:update_globals_warrow_imp}  ---
even for constraint systems with monotonic right-hand sides.

\begin{example}\label{ex:nonterm}
Consider the following constraint system with $\Locals = \{x, y\}, \Globals = \{a, b\},
\Dom = \mathbb{N}_0 \cup \{\infty\}$ and $\sqsubseteq~=~\leq$ where $ a\widen b = \infty$
whenever $b\not\leq a$
and $a \narrow b = b$ whenever $a=\infty$ and $a$ otherwise:
\[
    \begin{array}{l}
        (\sigma\,x,\rho) \sqsupseteq (\rho\,a, \{ a \mapsto (\rho\,b) + 1 \}) \\
        (\sigma\,y,\rho) \sqsupseteq (\rho\,b, \{ b \mapsto (\rho\,a) + 1 \})
    \end{array}
\]
All contributions to $a$ originate from $x$, whereas those to $b$ all originate from $y$.
Consider a solver that solves this constraint system for both $x$ and $y$, and
always stabilizes the current unknown before iterating others affected by a change.
Such a solver observes an increased side-effect to $a$ when evaluating the
local $x$. From the second time on, widening is applied for this contribution,
but the lost precision is fully recovered when narrowing with the still unchanged
contribution during the subsequent re-evaluation of $x$.
%
Because the value of $y$ is affected by the increased value for $a$,
a similar re-evaluation for $y$ is triggered once the value for $x$ has stabilized.
The new value for $b$ in turn necessitates a re-evaluation of $x$ and so on.
This results in the following infinite sequence:

\vspace{0.2cm}
\noindent\begin{tabularx}{\textwidth}{l||*{5}{>{\centering\arraybackslash}X|}*{2}{c|}*{1}{>{\centering\arraybackslash}X|}*{2}{c|}*{1}{>{\centering\arraybackslash}X|}*{1}{>{\centering\arraybackslash}X}}
  a & 0 & 1 & 1 & & & $\infty$ (1 $\widen$ 3) & 3 ($\infty$ $\narrow$ 3) & 3 & & & & \dots \\
  \hline
  b & 0 & & & 2 & 2 & & & & $\infty$ (2 $\widen$ 4) & 4 ($\infty$ $\narrow$ 4) & 4 & \dots \\
\end{tabularx}
\vspace{0.1cm}

\noindent While update rule \ref{lst:apinis-simple_imp} does not cause an infinite number of widening/narrowing (W/N) switches in this example,
a slightly modified (still monotonic) constraint system triggers the problem there.
\Refapp{appendix:warrowing-apinis-nonterm} provides this example as well as a program that gives rise to constraints akin to the ones in the example above.

\ignore{
To compute a value for $x$, $y$ is solved first. In the first iteration, $\rho\,a$ is modified via the contribution
$a \mapsto 1$. In the second iteration, the same contribution is triggered, causing solving of $y$ to terminate.
Subsequent evaluation of $z$ behaves similarly, but now schedules $y$ for re-computation because it is affected by the contribution $b \mapsto 2$.
Recomputing a value for $y$, however, triggers the contribution $a \mapsto 3$.
After applying widening to the contributions from $y$ to $a$, it increases to $\infty$.
A subsequent re-computation of $y$ causes the same contribution $a\mapsto 3$ --
due to which the contribution of $y$ is narrowed to $3$.
The interplay between the locals $y$ and $z$ and the globals $a$ and $b$ thus
may result in an infinite sequence of updates to the values both of locals and globals.
}
\end{example}
\ignore{
\begin{table}
    \begin{tabular}{c|c|c|c|c|c|c}
            & $u \in \Var$ & $(f_u~\sigma)_1$ & $(f_u~\sigma)_3$ & $\sigma$ & $ (\dividedsides~a)~y $ & $(\dividedsides~b)~z$\\
        \hline
        0 & & & & $\{ a \mapsto 0, b \mapsto 0, x \mapsto 0, y \mapsto 0, z \mapsto 0 \}$ & 0 & 0 \\
        1 & $y$ & $0$ & $\{ a \mapsto 1 \}$ & $\{ a \mapsto 1, b \mapsto 0, x \mapsto 0, y \mapsto 0, z \mapsto 0 \}$ & $1$ & $0$ \\
        2 & $y$ & $1$ & $\{ a \mapsto 1 \}$ & $\{ a \mapsto 1, b \mapsto 0, x \mapsto 0, y \mapsto 1, z \mapsto 0 \}$ & $1$ & $0$ \\
        3 & $z$ & $0$ & $\{ b \mapsto 2 \}$ & $\{ a \mapsto 1, b \mapsto 2, x \mapsto 0, y \mapsto 1, z \mapsto 0 \}$ & $1$ & $2$ \\
        4 & $z$ & $2$ & $\{ b \mapsto 2 \}$ & $\{ a \mapsto 1, b \mapsto 2, x \mapsto 0, y \mapsto 1, z \mapsto 2 \}$ & $1$ & $2$ \\
        5 & $y$ & $1$ & $\{ a \mapsto 3 \}$ & $\{ a \mapsto \infty, b \mapsto 2, x \mapsto 0, y \mapsto 1, z \mapsto 2 \}$ & $\infty$ & $2$ \\
        6 & $y$ & $\infty$ & $\{ a \mapsto 3 \}$ & $\{ a \mapsto \infty, b \mapsto 2, x \mapsto 0, y \mapsto \infty, z \mapsto 2 \}$ & $\infty$ & $2$ \\
        7 & $y$ & $3$ & $\{ a \mapsto 3 \}$ & $\{ a \mapsto 3, b \mapsto 2, x \mapsto 0, y \mapsto 3, z \mapsto 2 \}$ & $3$ & $2$ \\
        \vdots & \vdots & \vdots & \vdots & \vdots & \vdots & \vdots
    \end{tabular}
    \caption{Prefix of the infinite computation sequence of the modified top-down solver on \cref{ex:monotonicnonterm}.
    The first column is the step count, the second the unknown being solved.
    Columns three and four are the values and writes resulting from the evaluated constraint, respectively.
    The final columns depict the current assignment to unknowns and the contributions to globals $a$ and $b$.}
\end{table}
}
To cope with non-termination for our update rule \ref{lst:update_globals_warrow_imp},
we bound the number of phase switches in update rule~\ref{lst:update_globals_robust_warrowing_imp}.
For every pair $(g,x)\in\Globals\times\Locals$, we introduce a counter (called \emph{W/N gas})
which keeps track of how often such a switch from widening to narrowing has happened for
contributions of $x$ to $g$.
Then --- as soon as a particular threshold $N$ for the \emph{W/N gas} has been reached ---
$a\narrow b$ is replaced with $a$.
These counters together with the latest kind $\Box\in\{\widen,\narrow\}$ of applied
combine operator are maintained
in the data-structure \lstinline{cmap} as well.
By default, the initial value for a local $x$ in the map for global $g$ now is set to
$(\bot,\widen,0)$.
Here, the function \lstinline{first} extracts the first component from a triple of values.
A switch to narrowing is only performed finitely often for the contributions
of any local to any global.
Thus, when using the novel update rule
for the constraint system from \cref{ex:nonterm}, the number of encountered updates remains finite.

\begin{figure}[b]
    \begin{lstlisting}[style=ccode,caption={\label{lst:update_globals_robust_warrowing_imp}Updating globals with localized widening and bounded narrowing (when ignoring box in line 10). Replacing line 10 by the contents of the box yields a variant using recultant widening.}]
update_globals($\Locals$ orig, $2^{\Globals\times\Dom}$ contribs, $\Globals\to\Dom$ $\rho$) {
  static cmap; // Hashmap from $\Globals$ and $\Locals$ to $\Dom$, phase, and gas,
               // default value is $(\bot,\widen,0)$
  updates = $\emptyset$; // Updates to be propagated to the solver
  foreach((g,b) $\in$ contribs) { // Iterate over contributions
    (a,$\Box$,gas) = cmap[g][orig];
    if(a == b) continue; // Value unmodified
    if(b $\not\sqsubset$ a) {
      $\Box$ = $\widen$; // b not accounted for -> Widening
@
\begin{tikzpicture}[remember picture,overlay]
    \draw[red]([shift={(-3pt,2ex)}]pic cs:starta) rectangle([shift={(4pt,-0.65ex)}]pic cs:enda);
  \highlight{a46}{b46};
\end{tikzpicture}
\tikzmark{a46}
@      a = a $\widen$ b;     @\tikzmark{starta}@@\textbf{RELUCTANT: }@a = if(b $\sqsubseteq$ $\rho$ g) { a $\sqcup$ b } else { a $\widen$ b };@\tikzmark{enda}@  @\hspace*{\fill}\tikzmark{b46}@
    }
    else if($\Box$ == $\narrow$) a = a $\narrow$ b;
    else if($i$ >= $N$) continue; // Gas exhausted -> Do not narrow
    else (a,$\Box$,gas) = (a $\narrow$ b, $\narrow$, gas+1) // Change phase to narrow and consume gas
    cmap[g][orig] = (a,$\Box$,gas); // Update cmap
    d = $\bigsqcup_{\ell \in \Locals}$ first cmap[g][l]; // Compute new value
    if(d != $\rho$ g) updates = updates $\cup$ {(g,d)};
  }
  return updates;
}
    \end{lstlisting}
\end{figure}

\subsection{Reluctant Widening}\label{ss:reluctant}
When re-analyzing \cref{ex:incdec} with the update rule in \cref{lst:update_globals_robust_warrowing_imp}, some but not all precision can be recovered.
Assume, e.g., that the incrementing thread is solved first.
The contribution to $a$ by the contribution from line 5 of \cref{lst:incthread}
is initially $[1,1]$.
Subsequently, it is widened with $[1,2]$, resulting in $[1,\infty]$. In the following solving iteration,
the guard \lstinline{i < 10} in line 4 allows
the contribution to be narrowed to $[1,10]$.
The join over all contributions to $a$ then amounts to $[0,10]$. Subsequent analysis of thread 2 similarly produces
contributions $[-1,9]$, $[-\infty,9]$ and finally $[-10, 9]$.
Now, the join over all contributions yields the desired
interval $[-10, 10]$ for $a$.
However, this update to $a$ forces the re-analysis of thread 1, whose analysis depended
on an outdated value of $a$. As the lower bound for $a$ has decreased, thread 1 now contributes $[-9,10]$, which widens
its contribution to $[-\infty,10]$. As thread 1 does not enforce a lower bound on the value of the local $i$, the
lost precision is not recovered, and the analysis will terminate with the imprecise invariant $a \mapsto [-\infty,10]$.

The imprecision is caused by widening contribution $[1,10]$ with $[-9, 10]$. In this case, performing
a join operation instead would have allowed the analysis to terminate immediately.
The value of $[-9,10]$
was already subsumed by the running solution $\rho~a$. This leads us to propose a \emph{reluctant} application of
widenings.  Our modified update rule as shown in \cref{lst:update_globals_robust_warrowing_imp} now replaces widening with a \emph{join}
if the new contribution is already subsumed by the current value of the global.

\begin{example}
    With this modification to \lstinline{update_globals}, the analysis of \cref{ex:incdec} indeed retains the precise
    invariant $\mbox{\lstinline{a}} \mapsto [-10,10]$,
    as for thread 1, the contributions $[1,10]$ and $[-9, 10]$ are joined, rather than widened.
\end{example}

Interestingly, for some widening operators, the update rule using reluctant widening fails to guarantee finiteness of updates to globals.
For an example where it falls short, see \refapp{appendix:reluctant-nonterm}.
However, for some widening operators, finiteness remains guaranteed.
This is, e.g., the case for a class we call \emph{strong} widening operators.

\begin{definition} \label{def:strongwiden}
    The operator $\widen: \Dom \times \Dom \rightarrow \Dom$ is a \emph{strong} widening operator,
    if for all $a,b\in\Dom$, $a\sqcup b\sqsubseteq a\widen b$ and $a\widen b =a$ whenever $b \sqsubseteq a$ and,
    for every increasing sequence $a_1\sqsubseteq a_2\sqsubseteq \ldots$ in $\Dom$
    and every sequence $b_i\in\Dom, i\geq 1$, where for all $i\geq 1$,
    $a_i\widen b_i\sqsubseteq a_{i+1}$ it holds that there is some $i_0\geq 1$
    such that $b_i\sqsubseteq a_i$ for all $i\geq i_0$.
\end{definition}

Unlike in \cref{def:widen}, the left operand of $\widen$ in the ascending sequence of the $a_i$ need not be
the result of the previous widening operation, but must subsume it.
Such sequences, e.g., arise if
\emph{join} operations are interspersed between the widening operations.
We remark that any \emph{strong} widening operator also is a normal widening operator (\cref{def:widen}).
Demanding widenings to be strong is not unreasonable: many useful domains already provide strong widenings.

\begin{example}
    \emph{Threshold widening} \cite{blanchet2003} applied to an element $a_{i-1}$ with a non-subsumed element $b_i$
    always increases to the next threshold subsuming $a_{i-1}\sqcup b_i$.
    Assuming that the number of thresholds is finite, the sequence of the $a_i$ must necessarily be ultimately stable.
    Thus, threshold widenings applied to interval bounds result in strong widenings. 
    Therefore, the standard widening for intervals is strong.
    The same holds for
    the standard widening for the octagon domain~\cite{mine2006}.
\end{example}

\noindent
For the cartesian product of two domains $\mathbb{P} = \Dom_1 \times \Dom_2$
with the componentwise ordering,
the widening operator $\widen$ that applies strong widening operators $\widen_1,\widen_2$ componentwise,
is also strong.

However, not all widening operators are strong.
Consider the cartesian product domain $\mathbb{P}$,
now ordered lexicographically.
Consider again a widening operator $\widen$ which is the componentwise combination of widening operators $\widen_1,\widen_2$ where
%
for both $i=1,2$,
$a_i\widen_i b_i = a_i$ whenever $b_i\sqsubseteq a_i$.
Let $d_1 \sqsubset_1 d_2 \sqsubset_1 \dots$
be an infinite strictly ascending chain
in $\Dom_1$.
For the sequence $\bar b_i = (d_i,\top)$ and the value $\bar a_0 = (d_1,\bot)$,
consider the sequence $\bar a_i, i\geq 1,$ defined by $\bar a_i  = (d_{i+1},\bot)$.
Then for all $i\geq 1$,
\[
\bar a_i = (d_{i+1},\bot)\sqsupseteq (d_i,\top) =  (d_i,\bot) \widen (d_i,\top) = \bar a_{i-1}\widen\bar b_i
\]
This sequence forms an infinite strictly ascending chain -- implying that $\widen$ cannot be strong.
With the notion of a strong widening at hand, we can now state the main theorem of this
section:
\begin{theorem}\label{t:termination}
    Let $\System$ be a side-effecting constraint system with domain $\Dom$ whose widening $\widen$ is strong,
    and assume that the solver only considers finite subsets $L\subseteq\Locals$ and $G\subseteq\Globals$.
    For $i\geq 0$, assume that $\rho_i: G\to\Dom$, and $U_i$ = \lstinline{update_globals} $(x_i,\eta_i,\rho_i)$
    is determined according to \cref{lst:update_globals_robust_warrowing_imp}
    such that
    \[
    \rho_{i+1} = \rho_i\oplus\{g\mapsto d\mid (g,d)\in U_i\}
    \]
    where $\oplus$ denotes updating bindings in the left argument with new values provided in the right argument.
    Then, there is $j$, such that for all $i > j$, $U_i = \emptyset$.
    In other words, the number of updates to globals through \lstinline{update_globals} is finite.
\end{theorem}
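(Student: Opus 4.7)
The plan is to bound, for each pair $(g, o) \in G \times L$, the number of calls to \lstinline|update_globals| in which $o$'s contribution to $g$ actually changes $\rho\, g$; since $G$ and $L$ are finite and every non-empty $U_i$ arises from at least one such change, a per-pair finite bound is exactly what the theorem needs.

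Two simple preliminaries clear most of the clutter. First, I would establish by induction over the call sequence the invariant $\mathsf{cmap}[g][\ell] \sqsubseteq \rho\, g$ for every $g$ and $\ell$: whenever $g$ is touched, the new $\rho\, g$ is set to $\bigsqcup_\ell \mathrm{first}\,\mathsf{cmap}[g][\ell]$, and otherwise nothing relevant to $g$ changes. Second, I would observe that the reluctant branch (entered when $b \sqsubseteq \rho\, g$) never emits an update: by the invariant $a \sqsubseteq \rho\, g$, so the new $\mathsf{cmap}[g][o] = a \sqcup b$ stays $\sqsubseteq \rho\, g$, and since this is the only entry that changed the recomputed $d$ equals $\rho\, g$. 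Hence only the true-widening case ($b \not\sqsubseteq \rho\, g$) and the narrowing branch can ever alter $\rho$.

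For the narrowing contribution, note that for a fixed pair $(g, o)$ the phase $\Box$ alternates between widening and narrowing, and each widening-to-narrowing switch consumes one unit of gas, so at most $N$ narrowing phases ever occur. Inside a single narrowing phase the successive values of $\mathsf{cmap}[g][o]$ evolve as $a_k = a_{k-1}\,\narrow\, b_k$, which is ultimately stable by the definition of a narrowing operator; so only finitely many changes, and hence finitely many $\rho\, g$-altering updates, arise per phase.

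The main obstacle is bounding the true-widening events inside a single widening phase: within such a phase $\mathsf{cmap}[g][o]$ can be modified both by $a\,\widen\, b$ and by $a \sqcup b$, and the ordinary widening property is insufficient for such a mixed sequence. Here the strong widening hypothesis is essential. Within a widening phase the narrowing branch is disabled, so between two consecutive true widenings the cmap value can change only via reluctant joins, which only grow it. Suppose for contradiction that the true-widening events within one phase form an infinite sequence at times $t_1 < t_2 < \dots$, and let $a_i, b_i$ denote the cmap value just before $t_i$ and its contribution. Then $a_1 \sqsubseteq a_2 \sqsubseteq \dots$ and $a_i\,\widen\, b_i \sqsubseteq a_{i+1}$, since $a_{i+1}$ is obtained from $a_i\,\widen\, b_i$ by a finite run of joins; so \cref{def:strongwiden} provides an index $i_0$ with $b_i \sqsubseteq a_i$ for all $i \geq i_0$. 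Combined with the invariant $a_i \sqsubseteq \rho_{t_i}\, g$, this gives $b_i \sqsubseteq \rho_{t_i}\, g$, contradicting the defining condition of a true widening at time $t_i$. Hence each widening phase contains only finitely many true widenings; since gas also bounds the number of widening phases per pair by $N + 1$, summing over the finitely many pairs in $G \times L$ completes the proof.
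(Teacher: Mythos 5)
Your proof is correct and takes essentially the same route as the paper's: both arguments isolate a single pair $(g,x)$, rule out the reluctant join as a source of updates (via the invariant relating \texttt{cmap} and $\rho\,g$), use the W/N gas to confine narrowing, and then apply the strong-widening property to the increasing sequence of \texttt{cmap} values interleaved with joins to contradict the widening condition $b \not\sqsubseteq \rho\,g$. The differences are only organizational --- you count update-causing changes per pair and per phase directly, making explicit the $\rho\,g = \bigsqcup_\ell\texttt{cmap}[g][\ell]$ invariant and the ultimate stability of narrowing within a phase, whereas the paper extracts an infinite subsequence of updates for a fixed $g$ and origin $x$ and derives one global contradiction --- but the key lemma and its use are identical.
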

\begin{proof}
    Assume for a contradiction that the number of updates to globals is infinite,
    i.e., $\forall j: \exists i > j: U_i \neq \emptyset$.
    Since $G$ is finite, there must be some $g \in G$ such that there is an
    infinite sequence
    $i_1<\ldots<i_k<\ldots$ consisting of all $i_\kappa$
    such that $(g,d_\kappa)\in U_{i_\kappa}$ for some $d_\kappa$.
    In particular,
    $\rho_{i_\kappa+1}\,g\neq\rho_{i_\kappa}\,g$ for all $\kappa$.
    Since we assume $L$ to be finite, there must be some local $x \in L$ which occurs infinitely often among
    the $x_{i_\kappa}$.
    Let $j_1 < \ldots j_m<\ldots$ be the subsequence of the $i_\kappa$ where $x_{i_\kappa} = x$.
    Now consider the sequence of contributions $b_\mu$ of $x$ for $g$ in
    $\eta_{j_\mu}$,
    and $a_\mu,a'_\mu\in\Dom$ the values stored for $x$ in \lstinline|cmap[g][x]| before and
    after applying the update rule at $j_\mu$.
    Then for all $\mu\geq 1,$ $a_\mu\Box_\mu b_\mu = a'_\mu\sqsubseteq a_{\mu+1}$ for a sequence
    of operators $\Box_\mu\in \left\{\narrow, \join, \widen\right\}$.
    Since $\rho_{j_\mu+1}\,g\neq\rho_{j_\mu}\,g$,
    $b_\mu$ cannot be subsumed by $\rho_{j_\mu}\,g$.
    Consequently, none of the operators $\Box_\mu$ can equal $\sqcup$.
    As the number of switches from widening to narrowing is bounded, there is some
    $m$ such that for all $\mu\geq m$, $\Box_\mu\neq\narrow$, i.e., equals $\widen$.
    Recall that we assume the widening operator $\widen$ to be \emph{strong}.
    This means that for some $\mu\geq m$, $b_\mu\sqsubseteq a_\mu$ and thus
    also $b_\mu\sqsubseteq \rho_{j_\mu}\,g$.
    But then, according to our update rule, neither widening is applied to $b_\mu$
    nor any update occurs at $g$, i.e., $\rho_{j_\mu+1}\,g = \rho_{j_\mu}\,g$ ---
    contradiction.
    %
\end{proof}

\subsection{Abstract Garbage Collection to Remove Withdrawn Contributions} \label{ss:garbage}

To our dismay, even after some parts of the program are found to be unreachable,
their contributions to globals may stick around as toxic trash and hurt precision for
relevant parts of the program.


\begin{example}
Consider the multi-threaded program in \cref{lst:deaddirect} where the
shared variables are analyzed flow-insensitively.
Analyzers targeting concurrent programs often try to identify \emph{escaping} local variables, i.e., those
which may be accessed concurrently via pointers.
In line 7, the address of the local variable \lstinline|i| is written to the shared pointer variable \lstinline|a|.
This line of code is \emph{dead}, as the variable \lstinline|k| takes only values between $0$ and $10$ in the concrete.
When widening is applied at the loop head, however, the loop guard cannot improve the value of \lstinline|k|,
because it refers only to \lstinline|j|.
The widened value of \lstinline|k| can be observed for one solving iteration of the loop body.
During this iteration, line 7 provides $\{$\lstinline|&i|$\}$ as a contribution to the may-points-to set of
\lstinline|a|.
%
As is, the assertion \lstinline|*a == 0| cannot be proven,
because the analysis cannot exclude that \lstinline|a| may point to \lstinline|i|.
%
%
The problem here is a contribution occurring in one iteration but not later-on.
\end{example}

While notions of abstract garbage collection have been investigated~\cite{MightS06,vanEs2019,JohnsonSEMH14,Germane020},
they are concerned with removing unnecessary variable bindings from local states --- which is not the issue at hand.
For a more detailed comparison, see \cref{sec:relatedwork}.

\begin{figure}
\begin{minipage}[t]{0.465\textwidth}
\begin{lstlisting}[style=ccode,label=lst:deaddirect,caption={Outdated contribution due to flow-sensitive precision recovery.}]
int zero = 0;
int *a = &zero;
void thread1() {
  int i = 1;
  for(int j = 0, k = 0; j < 10; j++) {
    if (k > 20)
      a = &i;
    k = j;
  }
}
void thread2(){
  assert(*a == 0);
}
\end{lstlisting}
\end{minipage}\hfill
\begin{minipage}[t]{0.465\textwidth}
\begin{lstlisting}[style=ccode,escapechar=@,firstnumber=3,label=lst:deadindirect,caption={Spurious contribution from a procedure in a
  context that is trash.}]
int zero = 0;
int *a = &zero;
@
\begin{tikzpicture}[remember picture,overlay]
    \highlight{fstart}{fend}
    \highlight{callstart}{callend}
\end{tikzpicture}
\tikzmark{fstart}
@void f(int k, int *i) {
  if (k > 20)
    a = &i;
}@\hspace*{\fill}\tikzmark{fend}@
void thread1() {
  int i = 1;
  for(int j = 0, k = 0; j < 10; j++) {
@\tikzmark{callstart}@    f(k, &i);@\hspace*{\fill}\tikzmark{callend}@
    k = j;
  }
}
void thread2(){
  assert(*a == 0);
}
\end{lstlisting}
\end{minipage}
\end{figure}


The constraint $f_x$ of a local unknown $x$ may generate contributions to different globals,
depending on the currently attained assignments $\sigma:\Locals\to\Dom$ and $\rho:\Globals\to\Dom$.
Thus, in general, $x$ may contribute a value $d$ to a global $g$ during one evaluation,
but no value (i.e., $\bot$) during a subsequent evaluation.
Accordingly, there is no need for the value of $g$ to subsume $d$.
We refer to this outdated contribution of $x$ to $g$ as \emph{withdrawn}.
%
To enable the proposed update rules to actually remove withdrawn contributions of $x$ to $g$,
we propose an update rule \lstinline|update_globals$_\bot$|  in \cref{lst:omitted_contributions_imp}
that wraps around any of the previous update rules.
In the new update rule, we make withdrawn contributions from $x$ to $g$ explicit by
passing a contribution $(g, \bot)$ to the update rule called inside.
Technically, we introduce an internal data-structure \lstinline{old_contribs}
which provides for each local $x$ the \emph{set} of globals to which the previous call to \lstinline{update_globals}
for $x$ has provided a non-$\bot$ contribution.
The function \lstinline|update_globals$_\bot$| uses \lstinline{old_contribs} to retrieve
the set \lstinline|contribs|$_\bot$ of globals that received a non-$\bot$ contribution during the last call
of \lstinline|update_globals$_\bot$| for the local \lstinline|orig|.
For each such global $g$, a contribution $(g,\bot)$ is added to the set \lstinline|contribs| whenever
\lstinline|contribs| does not contain a contribution $g$ yet.
We denote this combination by \lstinline|contribs$_\bot$ $\sqcup$ contribs|.
In  examples, we internally use the update rule from \cref{lst:update_globals_robust_warrowing_imp}.
%
%
Using \lstinline{update_globals$_\bot$} in the example program of \cref{lst:deaddirect},
the contribution of line 7 to \lstinline|a| is narrowed to $\bot$ and \lstinline|i| is
analyzed flow-sensitively. The analysis now finds that \lstinline[style=ccode]|a|
may only point to \lstinline[style=ccode]|zero| --- proving both assertions.
\begin{figure}[b]
    \begin{lstlisting}[style=ccode,caption={\label{lst:omitted_contributions_imp}Preprocessing to ensure removal of withdrawn contributions.}]
update_globals$_\bot$($\Locals$ orig, $2^{\Globals\times\Dom}$ contribs, $\Globals\to\Dom$ $\rho$) {
  static old_contribs; // Hashmap from $\Locals$ to sets of globals receiving
                       // a contribution at last evaluation
  contribs$_\bot$ = { (g,$\bot$) | g $\in$ old_contribs[x]  };
  old_contribs[x] = { g | (g,_) $\in$ contribs }; // Update old_contribs
  // If a global appears in both sets, join associated values and
  // keep all elements where global appears only in one argument
  contribs = contribs$_\bot$ $\sqcup$ contribs;
  return update_globals(orig, contribs, $\rho$);
}
    \end{lstlisting}
\end{figure}

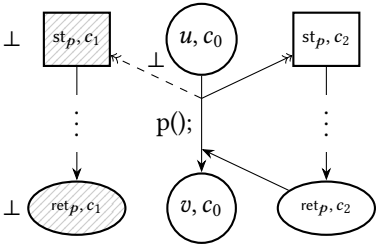
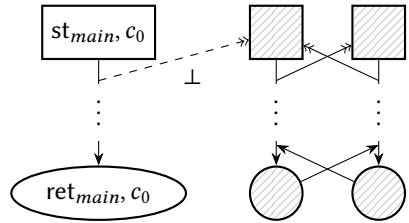
\begin{figure}[tbh]
    \centering
    \begin{subfigure}[t]{0.46\textwidth}
        \centering
        \begin{tikzpicture}
          \def\dis{1cm}
          \begin{scope}[every node/.style=cfgnode]
              \node (U1) at (0, 0) {$u,c_0$};
              \node (V1) [below=1.75cm of $(U1)$] {$v,c_0$};

              \node[shape=ellipse] (V2) [right=\dis of $(V1)$] {\tiny $\textsf{ret}_p,c_2$};
              \path let \p1 = (U1) in let \p2 = (V2) in node[shape=rectangle] (U2) at (\x2, \y1) {\scriptsize $\textsf{st}_p,c_2$};

              \node[shape=ellipse, pattern=north east lines,pattern color=black!20] (V3) [left=\dis of $(V1)$] {\tiny $\textsf{ret}_p,c_1$};
              \path let \p1 = (U1) in let \p2 = (V3) in node[shape=rectangle, pattern=north east lines,pattern color=black!20] (U3) at (\x2, \y1) {\scriptsize $\textsf{st}_p,c_1$};

              \node[draw=none] (b) [left=.5cm of $(U3)$] {$\bot$};
              \node[draw=none] (nb) [left=.5cm of $(V3)$] {$\bot$};
          \end{scope}
          \begin{scope}[>={Stealth[black]},
                      every edge/.style={draw=black}]
              \path [->] (U1) edge node [left] {p();} (V1);
              \path [->] (U2) edge node[rectangle,fill=white,minimum size=0cm,inner sep=0pt] {\raisebox{6pt}{$\vdots$}} (V2);
              \path [->] (U3) edge node[rectangle,fill=white,minimum size=0cm,inner sep=0pt] {\raisebox{6pt}{$\vdots$}} (V3);

              \path [-{to}{to}] ($(U1)!.35!(V1)$) edge node [above] {} (U2);
              \path [->]  (V2) edge node [above] {} ($(U1)!.65!(V1)$);

              \path [-{to}{to}, dashed] ($(U1)!.35!(V1)$) edge node [above] {$\bot$} (U3);
          \end{scope}
        \end{tikzpicture}
        \caption{A call to $p$ in context $c_2$ causes the withdrawal of a contribution to $p$ in the prior context $c_1$.
        Forward propagating solvers then set all unknowns $(v,c_1)$, $v$ program point of $p$, 
        to $\bot$.
    }
    \label{fig:botnotprop}
    \end{subfigure}\hfill
    \begin{subfigure}[t]{0.46\textwidth}
        \centering
        \begin{tikzpicture}
            \def\dis{1cm}
            \begin{scope}[every node/.style=cfgnode]
                \node[shape=rectangle] (U1) at (0, 0) {$\textsf{st}_{main},c_0$};
                \node[shape=ellipse] (V1) [below=1.75cm of $(U1)$] {$\textsf{ret}_{main},c_0$};

                \node[shape=rectangle, pattern=north east lines,pattern color=black!20] (U2) [right=2cm of $(U1)$] {};
                \node[pattern=north east lines,pattern color=black!20] (V2) [below=1.75cm of $(U2)$] {};

                \node[shape=rectangle, pattern=north east lines,pattern color=black!20] (U3) [right=\dis of $(U2)$] {};
                \node[pattern=north east lines,pattern color=black!20] (V3) [below=1.75cm of $(U3)$] {};
            \end{scope}
            \begin{scope}[>={Stealth[black]},
                        every edge/.style={draw=black}]
                \path [->] (U1) edge node[rectangle,fill=white,minimum size=0cm,inner sep=0pt] {\raisebox{6pt}{$\vdots$}} (V1);
                \path [->] (U2) edge node[rectangle,fill=white,minimum size=0cm,inner sep=0pt] {\raisebox{6pt}{$\vdots$}} (V2);
                \path [->] (U3) edge node[rectangle,fill=white,minimum size=0cm,inner sep=0pt] {\raisebox{6pt}{$\vdots$}} (V3);

                \path [-{to}{to}, dashed] ($(U1)!.3!(V1)$) edge node [below right] {$\bot$} (U2);

                \path [-{to}{to}] ($(U2)!.3!(V2)$) edge node [above] {} (U3);
                \path [<-] ($(U2)!.7!(V2)$) edge node [above] {} (V3);

                \path [-{to}{to}] ($(U3)!.3!(V3)$) edge node [above] {} (U2);
                \path [<-] ($(U3)!.7!(V3)$) edge node [above] {} (V2);
            \end{scope}
        \end{tikzpicture}
        \caption{
	Abstract garbage collection
	suffers from the same problem as reference counting. 
        Unreachable unknowns may remain spuriously live by referring to each other cyclically.}
        \label{fig:callcycle}
    \end{subfigure}
    \caption{Two scenarios involving trash. \Cref{fig:botnotprop} displays the situation without
    abstract garbage collection.
    \Cref{fig:callcycle} displays a case where
    abstract garbage collection fails.
    Hatched nodes represent unknowns that are trash. Arrows with double tips depict contributions to globals. Withdrawn contributions
    are dashed.}
\end{figure}

Withdrawing contributions may, in particular, impact context-sensitive analyses.
A procedure may at some point be analyzed in a context that later becomes irrelevant,
as the call-site creating this context turns out to be unreachable.
By analogy with contributions that are \emph{trash}, we also call an \emph{unknown} trash whenever it is found to be irrelevant
(relative to a given $\sigma$ and $\rho$).
The value of unknowns that are trash can be set to $\bot$ without compromising the analysis result.
Furthermore, all their contributions to globals should be withdrawn as these are not only \emph{trash}, but also \emph{toxic}, i.e., may cause
imprecision at other unknowns.
Taking out trash may cause further unknowns (local or global)
to become trash.
We call this successive removal of trash \emph{abstract garbage collection}.

For some solvers, the update rule~\ref{lst:omitted_contributions_imp} is sufficient to take out large fractions of the trash.
This is, for instance, the case for \emph{forward propagating} solvers (e.g., \slrp~\cite{apinis2013}): 
When an unknown $x$ changes its value, all unknowns depending on $x$ are scheduled for re-evaluation.
If the start point $(\textsf{st}_p,c)$ receives the value $\bot$,
this value is eventually propagated to all unknowns $(v,c)$, $v\in N_p,$ and in this way, also
all contributions to globals triggered by their right-hand sides are revoked.

\begin{example}\label{ex:wewanthisbeforthestrangetdsidestuff}
Consider an interprocedural analysis with full context, i.e., with $\Context = \Dom$,
on \cref{lst:deadindirect}.
The variable \lstinline|i| only escapes if \lstinline|f| is called with \lstinline|k > 20|.
As before, the abstract value for \lstinline|k| is temporarily $[0,\infty]$ during one solving iteration
of the loop in \lstinline|thread1|.
Hence, \lstinline|f| is analyzed in some context $c$ with $k \mapsto [0,\infty]$.
In this context, \lstinline|f| provides a contribution to \lstinline|a| by which \lstinline|i| escapes. Subsequent
solving iterations no longer call \lstinline|f| in context $c$ and the value of $(\textsf{st}_f, c)$ becomes $\bot$.
For forward propagating solvers, eventually all program points in context $c$ become $\bot$ and the harmful contribution
by \lstinline|f| in context $c$ is withdrawn.
Thus, the assertion \lstinline|a == 0| can be shown.
\end{example}

The situation is different for solvers such as the local solver \tdside{} which
avoid eager re-evaluation of unknowns affected by another unknown changing its value and
instead only \emph{mark} all possibly affected unknowns as \emph{unstable}. Such unknowns
are only re-evaluated when later queried again.
In \refapp{appendix:td3agc},
we detail the problem and propose a solution to bring abstract garbage collection also to
\tdside{}-like solvers by triggering eager re-evaluation in some cases.

While we have demonstrated how abstract garbage collection
can be incorporated into any hosting solver,
this solution falls short in the presence of cyclic garbage (see \cref{fig:callcycle}).
Some cyclic garbage may arise from the interprocedural analysis of directly recursive procedures:
The unknown $(\textsf{st}_p,c)$ for the start point of such a procedure $p$ in context $c$ may, after some steps,
receive all of its contributions from other unknowns $(u,c)$ where $u$ is in the same procedure, making it effectively trash.
For such cases, one solution is to distinguish between \emph{internal} and \emph{external} contributions to $(\textsf{st}_p,c)$:
Then, an unknown $(\textsf{st}_p,c)$ can be collected when all \emph{external} contributions to it are $\bot$.
As we expect abstract garbage collection to already yield meaningful results without removal of cyclic garbage,
we leave it for future work to experiment with such further extensions.

By using the update rule from \cref{lst:update_globals_robust_warrowing_imp} within \lstinline{update_globals$_\bot$},
not only soundness but also the termination guarantees carry over. This comes at the expense that ---
once gas is exhausted --- contributions are no longer withdrawn. Alternatively,
contributions could be withdrawn irrespective of the gas value and a separate
gas be introduced to bound how often an unknown becomes trash.

\section{Evaluation}
\label{sec:evaluation}
\newcommand\casecount{11222}
\newcommand\incomplete{1180}
\newcommand\incompleteourstwz{1012}
\newcommand\incompleteapinis{1179}
\newcommand\incompleteoursth{1009}
\newcommand\incompletebase{1008}

\newcommand\result{\the\numexpr\casecount - \incomplete\relax}
We implemented the update rules from \cref{lst:update_globals_robust_warrowing_imp} with reluctant widening and \cref{lst:omitted_contributions_imp} in the \Goblint\ analyzer written in \textsc{OCaml}.
As a baseline for our experiments we use the default update rule provided by \Goblint\ .
This update rule uses the first contribution to a global $g$ as initialization, joins the second increasing contribution, and widens with any further increasing contribution to $g$.
It is not able to shrink values of globals.
We are interested in the following research questions:
\begin{questions}[leftmargin=1cm]
  \item Is the new update rule (\cref{lst:update_globals_robust_warrowing_imp} with reluctant widening, referred to as \textbf{ours}) more precise than \Goblint's default update rule? \label{rq:base}
  \item Is the new update rule \textbf{ours} more precise than \textbf{apinis} as extracted from \cite{apinis2013}?
  	\label{rq:apinis}
  \item Does limiting the W/N switches affect the analysis precision for update rule \textbf{ours}? \label{rq:warrowlimit}
  \item	Does the choice of the update rule impact termination behavior? \label{rq:termination}
  \item What are the impacts of abstract garbage collection introduced by \cref{lst:omitted_contributions_imp} (referred to as \textbf{ours$_\bot$})
	on analysis precision, run-time performance, and memory consumption? \label{rq:bot}
\end{questions}
We perform our experiments on a machine with an Intel Xeon 8260 CPU and 504 GB of RAM, where each instance of the \Goblint\ analyzer runs on a single core.
For \labelcref{rq:base,rq:apinis,rq:warrowlimit,rq:termination} each instance of \Goblint\ is limited to 15 GB of RAM.
\Goblint\ is configured to use intervals and points-to sets as domains, among others.
To study the precision of analyses with different update rules, we compare the numbers of unknowns for which one of the analyses yields a more precise abstract value.
Unknowns encountered in only one analysis are treated as $\bot$ in the other.
When an analysis has $n$ more precise, $m$ less precise and $k$ incomparable unknowns compared to the baseline,
we compute the \emph{net} improvement by $\frac{n - m}{n+m+k}$.
We consider a change in precision \emph{substantial} if at least 5\% of unknowns are improved/deteriorated.

For \labelcref{rq:base,rq:apinis,rq:warrowlimit}, we use a context-\emph{insensitive} analysis of
sequential code. Due to context-insensitivity, multiple distinct contributions can be expected
to the start-points of procedures.
%
We run \Goblint\ with our new update rule \textbf{ours} and the update rule \textbf{apinis}
and compare both to a run with its default update rule.
As benchmarks we choose the ReachSafety category of the Competition on Software Verification~\cite{Beyer24} containing 11222 tasks.
On $\result$ of those tasks, all analyses run to completion within a time limit of 900s.
The results are shown in \cref{fig:reachsafety-nocontext}.

Concerning \ref{rq:base}, we find a net precision improvement in about 52\% of tasks when using the new update rule
\textbf{ours} --- both for W/N gas 3 (\textbf{ours}$_3$) and 20 (\textbf{ours}$_{20}$).
%
13\% (\textbf{ours}$_3$) and 20\% (\textbf{ours}$_{20}$) of tasks show substantial improvements.\footnote{In
preliminary experiments, W/N gas exceeding 20 rarely improved net precision.}
Conversely, \textbf{ours}$_{\{3,20\}}$ loses net precision compared to the baseline on about 1\% of tasks ---
losses which are mostly substantial.
%
Some of these outliers may be due to widenings avoided by the baseline.
This can occur when there are exactly two contributions
$d_1 \mathbin{\cancel{\sqsubseteq}} d_2$ from an origin to a global $g$.
Here, the baseline update rule marks $g$ as a widening point, but \emph{joins} $d_2$
without widening.
Instead, our update rules widen immediately.
The subcategory ReachSafety-Recursive greatly benefits from update rule \textbf{ours}
(see \cref{fig:reachsafety-recursive-nocontext}).
Both analyses provide substantial improvements in about 42\% of cases.
Interestingly, the share of net degraded cases is about 5\% which are all substantial.
As seen in \cref{ex:nocontext}, entries of recursive procedures are prone to
being widened, explaining these losses.
%
We conclude w.r.t.\ \ref{rq:base}, that the new update rule significantly improves
the net precision in a considerable portion of cases. 

W.r.t.\ \ref{rq:apinis}, the update rule \textbf{apinis}
produces a net improvement only in about 40\% of tasks, which are substantial only in 7\% of tasks.
At the same time, 19\% of the tasks have a net precision loss and 12\% a substantial loss.
Our update rule has far fewer cases of net precision loss, improves a larger number of tasks (52\%), and tends to improve larger fractions of unknowns.
On the subset of ReachSafety-Recursive the update rule \textbf{apinis} achieves
substantial net precision improvements only in about 22\% of cases, compared to 42\% for \textbf{ours}.

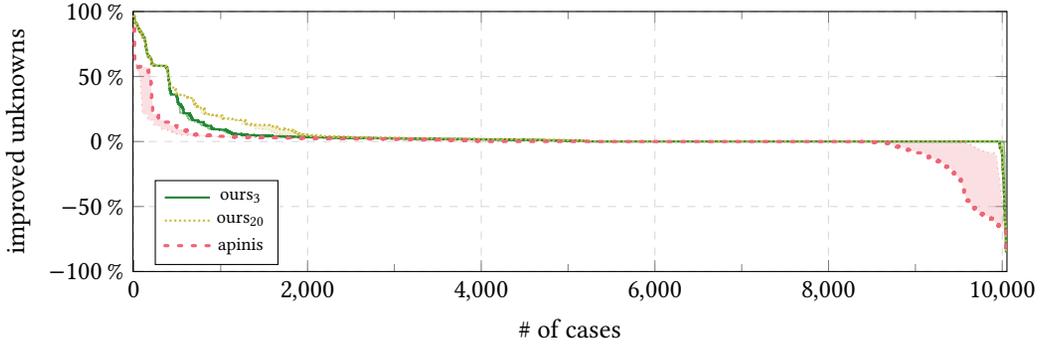
\begin{figure}
\centering
\pgfplotstableread[col sep=comma]{data/reachsafety-context0/plot2-ours3-vs-base.csv}{\tablecwth}
\pgfplotstableread[col sep=comma]{data/reachsafety-context0/plot2-ours3-vs-base-unambiguous.csv}{\tablecwthza}
\pgfplotstableread[col sep=comma]{data/reachsafety-context0/plot2-ours20-vs-base.csv}{\tablecwtw}
\pgfplotstableread[col sep=comma]{data/reachsafety-context0/plot2-ours20-vs-base-unambiguous.csv}{\tablecwtwza}
\pgfplotstableread[col sep=comma]{data/reachsafety-context0/plot2-apinis-vs-base.csv}{\tableapinis}
\pgfplotstableread[col sep=comma]{data/reachsafety-context0/plot2-apinis-vs-base-unambiguous.csv}{\tableapinisza}
\begin{tikzpicture}
	\begin{axis}[
		width=0.95\linewidth,
		height=5cm,
		grid=major,
		yticklabel=\pgfmathparse{100*\tick}\pgfmathprintnumber{\pgfmathresult}\,\%,
		yticklabel style={/pgf/number format/.cd,fixed,precision=2},
		xticklabel style={/pgf/number format/fixed},
		scaled x ticks=false,
		grid style={dashed,gray!30},
		ylabel={improved unknowns},
		ylabel style={align=center},
		minor x tick num=1,
		xtick distance = 2000,
		xlabel={\# of cases},
		enlarge x limits={abs=\pgflinewidth},
		legend style={font=\scriptsize,at={(0.025,0.025)},anchor=south west},
		ymax=1,
		ymin=-1
	]

	\addplot+[const plot mark left,mark=none,thick,name path=CWTH]
		table[x=count,y=better] {\tablecwth};
	\addlegendentry{ours$_3$};
	\pgfplotsset{cycle list shift=-1}
	\addplot+[const plot mark left,forget plot,mark=none,opacity=0.5,name path=CWTHZA]
	table[x=count,y=better] {\tablecwthza};
	\addplot+[opacity=0.2,forget plot] fill between[of=CWTH and CWTHZA];
	\pgfplotsset{cycle list shift=0}

	\addplot+[const plot mark left,mark=none,densely dotted,thick,name path=CWTW]
		table[x=count,y=better] {\tablecwtw};
	\addlegendentry{ours$_{20}$};
	\pgfplotsset{cycle list shift=-1}
	\addplot+[const plot mark left,forget plot,mark=none,densely dotted,opacity=0.5,name path=CWTWZA]
	table[x=count,y=better] {\tablecwtwza};
	\addplot+[opacity=0.2,forget plot] fill between[of=CWTW and CWTWZA];
	\pgfplotsset{cycle list shift=0}

		\addplot+[const plot mark left,mark=none,line cap=round,loosely dotted,very thick,name path=APINIS]
		  table[x=count,y=better] {\tableapinis};
		\addlegendentry{apinis};
		\pgfplotsset{cycle list shift=-1}
		\addplot+[const plot mark left,forget plot,mark=none,dotted,opacity=0.5,name path=APINISZA]
		table[x=count,y=better] {\tableapinisza};
		\addplot+[opacity=0.2,forget plot] fill between[of=APINIS and APINISZA];
	  \end{axis}
	\end{tikzpicture}
  \caption{Net precision difference
  		to baseline for ReachSafety. All analyses are context-insensitive.
		   The x-axis enumerates the $\result$ cases that run to completion for all four analyses.
		   The cases are sorted by the net precision gain over the baseline.
		   In some cases, the net precision difference \emph{flattens} cases with improved \emph{and}
		   degraded or with incomparable unknowns.
		   To visualize the effect on the plot,
		   we include \emph{thin} plots in which all flattened cases are assigned a net precision difference
		   of zero, and shade the area between the two plots.
		  }
  \label{fig:reachsafety-nocontext}
  \end{figure}

Regarding \ref{rq:warrowlimit}, the update rule \textbf{ours} yields a net improvement on 52\% of tasks,
regardless of whether the W/N gas was set to 3 or 20.
However, with a W/N gas of 20, 20\% of tasks show substantial net improvements,
instead of 13\% with W/N gas of 3.
On ReachSafety-Recursive, with both gas values,
an equal amount of net precision is recovered (42\%).

For\ \ref{rq:termination}, to obtain more reliable performance numbers,
we ran runtime experiments with BenchExec~\cite{BeyerLW19}.\footnote{To
enable precision comparisons, the runs for \labelcref{rq:base,rq:apinis,rq:warrowlimit}
require marshalling incuring an additional overhead.}
Overall, the analyses have almost identical runtimes when they terminate, and the majority of timeouts
and other failures are shared between all considered analyses.
An exception is \textbf{ainis}, for which many outliers with high runtime overheads are observed.
The analyses with \Goblint's default update rule and \textbf{ours} with W/N gas of 3 and 20
fail to complete in a similar number of cases (972 vs. 975 or 974),
whereas \textbf{apinis} fails to complete more often (1003).

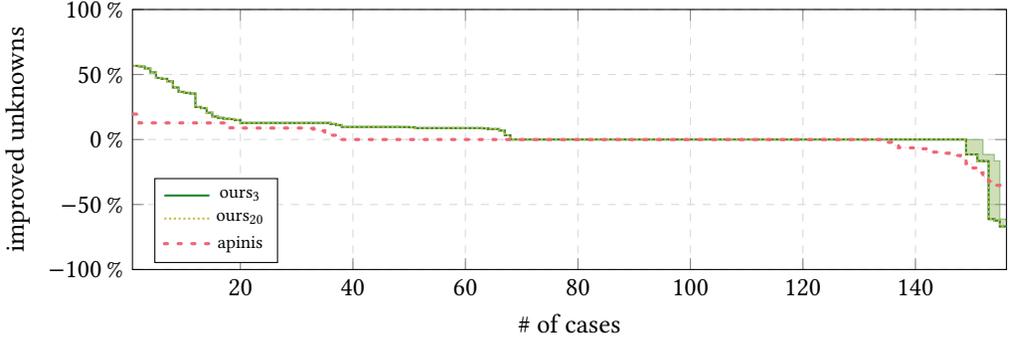
\begin{figure}
\centering
\pgfplotstableread[col sep=comma]{data/reachsafety-recursive-context0/plot3-ours3-vs-base.csv}{\tablecwth}
\pgfplotstableread[col sep=comma]{data/reachsafety-recursive-context0/plot3-ours3-vs-base-unambiguous.csv}{\tablecwthza}
\pgfplotstableread[col sep=comma]{data/reachsafety-recursive-context0/plot3-ours20-vs-base.csv}{\tablecwtw}
\pgfplotstableread[col sep=comma]{data/reachsafety-recursive-context0/plot3-ours20-vs-base-unambiguous.csv}{\tablecwtwza}
\pgfplotstableread[col sep=comma]{data/reachsafety-recursive-context0/plot3-apinis-vs-base.csv}{\tableapinis}
\pgfplotstableread[col sep=comma]{data/reachsafety-recursive-context0/plot3-apinis-vs-base-unambiguous.csv}{\tableapinisza}

\begin{tikzpicture}
	\begin{axis}[
		width=0.95\linewidth,
		height=5cm,
		grid=major,
		grid style={dashed,gray!30},
		ylabel={improved unknowns},
		ylabel style={align=center},
		yticklabel=\pgfmathparse{100*\tick}\pgfmathprintnumber{\pgfmathresult}\,\%,
		yticklabel style={/pgf/number format/.cd,fixed,precision=2},
		xlabel={\# of cases},
		enlarge x limits={abs=\pgflinewidth},
		legend style={font=\scriptsize,at={(0.025,0.025)},anchor=south west},
		ymax=1,
		ymin=-1
	]

	\addplot+[const plot mark left,mark=none,thick,name path=CWTH]
		table[x=count,y=better] {\tablecwth};
	\addlegendentry{ours$_3$};
	\pgfplotsset{cycle list shift=-1}
	\addplot+[const plot mark left,forget plot,mark=none,opacity=0.5,name path=CWTHZA]
	table[x=count,y=better] {\tablecwthza};
	\addplot+[opacity=0.2,forget plot] fill between[of=CWTH and CWTHZA];
	\pgfplotsset{cycle list shift=0}

	\addplot+[const plot mark left,mark=none,densely dotted,thick,name path=CWTW]
		table[x=count,y=better] {\tablecwtw};
	\addlegendentry{ours$_{20}$};
	\pgfplotsset{cycle list shift=-1}
	\addplot+[const plot mark left,forget plot,mark=none,densely dotted,opacity=0.5,name path=CWTWZA]
	table[x=count,y=better] {\tablecwtwza};
	\addplot+[opacity=0.2,forget plot] fill between[of=CWTW and CWTWZA];
	\pgfplotsset{cycle list shift=0}

	\addplot+[const plot mark left,mark=none,line cap=round,loosely dotted,very thick,name path=APINIS]
		table[x=count,y=better] {\tableapinis};
	\addlegendentry{apinis};
	\pgfplotsset{cycle list shift=-1}
	\addplot+[const plot mark left,forget plot,mark=none,dotted,opacity=0.5,name path=APINISZA]
	table[x=count,y=better] {\tableapinisza};
	\addplot+[opacity=0.2,forget plot] fill between[of=APINIS and APINISZA];
	\end{axis}
\end{tikzpicture}
\caption{Net precision difference to baseline for ReachSafety-Recursive. All analyses are context-insensitive.}
\label{fig:reachsafety-recursive-nocontext}
\end{figure}

To answer \ref{rq:bot} and measure the effects of abstract garbage collection,
we pivot to analyses with \emph{full context}. The contributions to unknowns for
start-points of procedures then is given by the context.
%
Different analyses, however, may discover different contexts and thus may consider
different sets of unknowns.
Therefore, we restrict the comparison to globals
related to the \emph{flow-insensitive} analysis of variables possibly shared between threads.
For update rule \textbf{ours}, we only consider W/N gas of 3, as
more gas does not significantly impact net precision.
We conduct our experiment on two sets of large multi-threaded programs
established in prior literature.
The first set~\cite{schwarz2021,schwarz2023b} consists of six multi-threaded \textsc{Posix} programs and seven
Linux device drivers pre-processed by the LDV toolchain~\cite{DBLP:journals/pcs/ZakharovMMNPK15}.
The second set~\cite{DBLP:conf/icse/HongR23} was assembled from GitHub, originally in the context
of translation of {C} programs to \textsc{Rust}.
Program sizes range between a few hundred and a few thousand lines of code.
\Refapp{appendix:benchmarks} provides a detailed description of the benchmarks, as well as a list of those
excluded, e.g., because some configurations do not terminate within 15~min.
The analysis employs the relational thread-modular value-analysis by \citet{schwarz2023b}
instantiated with the cartesian interval domain.
\cref{fig:precision-multi-threaded} summarizes found differences in precision.
The figure shows per program and approach how many unknowns were improved, worsened, or become incomparable
relative to the baseline. The latter case only occurs with \textbf{apinis}.

When turning to net precision changes by aggregating precision losses and gains per program, we find that the update rule
\textbf{ours} improves precision in 24 out of 38 cases,
with 12 of these improvements being substantial.
\textbf{ours$_\bot$} improves precision for 32 programs, with 15 substantial improvements.
\textbf{ours$_\bot$} always improves at least as many unknowns as \textbf{ours} --- and often more.
\textbf{ours} and \textbf{ours$_\bot$} worsen net precision in only three and two cases, respectively,
with two cases showing substantial losses for both approaches.
The update rule \textbf{apinis}, on the other hand, improves net precision in only 11 cases,
with six of these being substantial.
Conversely, it worsens net precision in 18 cases, with 11 of these precision losses being substantial.

There are three outliers where all update rules are less precise than the baseline for more than
10\% of unknowns. For \texttt{ypbind}, the analysis fails to
resolve a pointer used to start a new thread leading to over 100 threads being analyzed
which indicates a critical loss of precision.
For all three programs, pinpointing the exact reason is out of reach,
given the size of the programs and the large number of steps performed in the fixpoint iteration.
However, given the non-monotonicity of the analysis, it is perhaps encouraging that only three such outliers are observed.

\begin{figure}
\begin{minipage}[b]{0.5\textwidth}
  \scalebox{0.8}{  
  \centering
  \pgfplotstableread[col sep=comma]{data/large-mutex-meet/plot4-differences.csv}{\tabledifferences}
  \begin{tikzpicture}[inner frame sep=0pt, outer frame sep=0pt]
    \pgfplotsset{
      compat=1.13,
      every axis/.style={
        bar shift=\shift,
        width=8cm,
        height=25cm,
        xbar stacked,
        bar width=5pt,
        grid=major,
        grid style={dashed,gray!30},
        xmin=-0.32,
        xmax=0.25,
        xlabel = {fraction of affected unknowns},
        xtick distance = 0.1,
        xticklabel=\pgfmathparse{100*\tick}\pgfmathprintnumber{\pgfmathresult}\,\%,
        xticklabel style={/pgf/number format/.cd,fixed,precision=2},
        ytick=data,
        yticklabels from table={\tabledifferences}{file},
        y tick label style={font=\small},
        enlarge y limits={abs=0.3cm},
        legend style={font=\small,at={(axis cs:-0.3,0.22)},anchor=south west},
        legend entries={ours,
                        ours$_\bot$,
                        apinis,
                        more precise,
                        less precise,
                        incomparable},
        legend columns = 1,
      },
    }
    \foreach \hide/\index/\a/\shift in {/0/ours/5pt,hide axis/3/ours-bot/0pt,hide axis/6/apinis/-5pt}{
      \begin{axis}[\hide]
        \addlegendimage{}
        \addlegendimage{pattern=north west lines}
        \addlegendimage{pattern=north east lines}
        \addlegendimage{fill=improved}
        \addlegendimage{fill=worse}
        \addlegendimage{fill=incomparable}

        \pgfplotsset{cycle list shift=\index, cycle list name=stackedbar}
        \addplot table [x expr=\thisrow{worse-\a}+0.000001, y expr=\coordindex] {\tabledifferences};
        \addplot table [x expr=-\thisrow{better-\a}, y expr=\coordindex] {\tabledifferences};
        \addplot table [x=incomparable-\a, y expr=\coordindex] {\tabledifferences};
      \end{axis}
    }
  \end{tikzpicture}}
  \centering
  \begin{minipage}{0.9\textwidth}
  \caption{Precision differences for large multi-threaded programs compared to the baseline.}
  \label{fig:precision-multi-threaded}
  \end{minipage}
\end{minipage}
\begin{minipage}[b]{0.48\textwidth}
\centering

\begin{minipage}{0.48\textwidth}
  \vspace{1.5em}
  \pgfplotstableread[col sep=comma]{data/large-mutex-meet/plot5-extra-short.csv}{\tabletimingsextrashort}
\scalebox{0.8}{
  \begin{tikzpicture}
  \begin{axis} [
    name=extrashort,
    width=3.5cm,
    height=23cm,
    xbar=\pgflinewidth,
    bar width=4pt,
    xmin=0,
    xmax=9,
    xlabel={runtime in [s].},
    grid=major,
    grid style={dashed,gray!30},
    ytick=data,
    yticklabels from table={\tabletimingsextrashort}{file},
    y tick label style={font=\small},
    enlarge y limits={abs=0.6cm},
    legend entries={apinis,
      ours$_{\bot}$,
      ours,
      base},
    reverse legend,
    legend style={font=\small,at={(70pt,-50pt)},anchor=north},
    legend image code/.code={%
     \draw[#1,/tikz/.cd,yshift=-0.25em]
      (0cm,0cm) rectangle (6pt,9pt);},
    legend columns = 4
  ]
  \foreach \a in {apinis,ours-bot,ours,base}{
    \addplot+ table [x=\a, y expr=\coordindex] {\tabletimingsextrashort};
  }
  \end{axis}
\end{tikzpicture}}
\end{minipage}
\begin{minipage}{0.48\textwidth}
\scalebox{0.8}{
  \begin{tikzpicture}
  \pgfplotstableread[col sep=comma]{data/large-mutex-meet/plot5-middle-low.csv}{\tabletimingsmiddlelow}
  \begin{axis} [
    xlabel={runtime in [s].},
    name=middlelow,
    width=3.5cm,
    height=10cm,
    xbar=\pgflinewidth,
    bar width=4pt,
    xmin=0,
    xmax=35,
    grid=major,
    grid style={dashed,gray!30},
    ytick=data,
    yticklabels from table={\tabletimingsmiddlelow}{file},
    y tick label style={font=\small},
    enlarge y limits={abs=0.6cm}
  ]
  \foreach \a in {apinis,ours-bot,ours,base}{
    \addplot+ table [x=\a, y expr=\coordindex] {\tabletimingsmiddlelow};
  }
  \end{axis}
  \pgfplotstableread[col sep=comma]{data/large-mutex-meet/plot5-middle.csv}{\tabletimingsmiddle}
  \begin{axis} [
    at={($(middlelow.north)+(0,1cm)$)},
    name=middle,
    anchor=south,
    width=3.5cm,
    height=10cm,
    xbar=\pgflinewidth,
    bar width=4pt,
    xmin=0,
    xmax=250,
    grid=major,
    grid style={dashed,gray!30},
    ytick=data,
    yticklabels from table={\tabletimingsmiddle}{file},
    y tick label style={font=\small},
    enlarge y limits={abs=0.6cm},
  ]
  \foreach \a in {apinis,ours-bot,ours,base}{
    \addplot+ table [x=\a, y expr=\coordindex] {\tabletimingsmiddle};
  }
  \end{axis}
  \pgfplotstableread[col sep=comma]{data/large-mutex-meet/plot5-extra.csv}{\tabletimingsextra}
  \begin{axis} [
    at={($(middle.north)+(0,1cm)$)},
    anchor=south,
    width=3.5cm,
    height=3cm,
    xbar=\pgflinewidth,
    bar width=4pt,
    xmin=0,
    xmax=1000,
    grid=major,
    grid style={dashed,gray!30},
    ytick=data,
    yticklabels from table={\tabletimingsextra}{file},
    y tick label style={font=\small},
    enlarge y limits={abs=0.1cm},
    xtick distance = 400
  ]
  \foreach \a in {apinis,ours-bot,ours,base}{
    \addplot+ table [x=\a, y expr=\coordindex] {\tabletimingsextra};
  }
  \end{axis}
\end{tikzpicture}}
\end{minipage}
\centering
\begin{minipage}{0.9\textwidth}
\caption{Runtimes of the analyses of large, multi-threaded benchmarks.}
\end{minipage}
\label{fig:timings-multi-threaded}
\end{minipage}

\end{figure}

We now consider the impact of abstract garbage collection on performance.
The run-times of the analyses are shown in \cref{fig:timings-multi-threaded}.
Again, we observe that update rule \textbf{ours} does not cause a significant overhead.
Abstract garbage collection generally comes with a moderate slowdown,
but causes a slowdown by a factor of 2.72 in the extreme.
This penalty on the run-time should be contrasted with the number of eliminated unknowns.
The next experiment therefore investigates the number of eliminated contexts per program
(\cref{fig:dead-multi-threaded}).
We find that the fraction of procedure contexts that
are identified as trash is unexpectedly high.
For 18 out of 38 cases, over 40\% of contexts are identified as trash.
During the analysis, locals at call-sites may change their abstract values,
such that procedures are re-analyzed for further contexts.
This may cause a cascading effect as the called procedures themselves may call other procedures.
When temporarily trash contexts are later rediscovered, the prior propagation of $\bot$ may not be advantageous.
For our experiments, there are very few intermittently trash contexts.
\Cref{fig:memory-multi-threaded} shows peak heap memory usage of the analyzer for each program and update rule.
Interestingly, the abstract garbage collection has an effect on the memory usage.
Withdrawing outdated writes may result in higher numbers of unknowns being set to $\bot$,
causing the garbage collector of the \textsc{OCaml} runtime to clean up their prior abstract values.
We focus on changes
of over 5\% compared to \textbf{ours}, as some deviations are expected due to the runtime system.
We find that \textbf{ours$_\bot$} \emph{increases} memory usage by 15\% for two programs, while it yields a reduction for 18 programs.
At the extremes, the magnitude of the reduction is considerable. For \verb|ypbind| and \verb|smtprc|, e.g.,
the heap memory footprint is roughly halved.

\begin{figure}
  \begin{minipage}[b]{0.64\textwidth}
    \begin{minipage}{0.49\textwidth}
    \centering
  \pgfplotstableread[col sep=comma]{data/large-mutex-meet/plot6-short.csv}{\tablememlo}
  \scalebox{0.8}{
  \begin{tikzpicture}
    \begin{axis} [
      name=shortheap,
      width=5cm,
      height=24cm,
      xbar=\pgflinewidth,
      bar width=4pt,
      xmin=0,
      xlabel={heap memory usage in [MB].},
      grid=major,
      grid style={dashed,gray!30},
      ytick=data,
      yticklabels from table={\tablememlo}{file},
      y tick label style={font=\small},
      enlarge y limits={abs=0.6cm},
      legend entries={apinis,
      ours$_{\bot}$,
      ours,
      base},
      reverse legend,
      legend pos=south east,
      legend image code/.code={%
       \draw[#1,/tikz/.cd,yshift=-0.25em]
        (0cm,0cm) rectangle (6pt,9pt);},
      legend style={font=\small}
    ]
    \foreach \a in {apinis,ours-bot,ours,base}{
      \addplot+ table [x=\a-memory, y expr=\coordindex] {\tablememlo};
    }
    \end{axis}
  \end{tikzpicture}}
\end{minipage}
\begin{minipage}{0.49\textwidth}
  \scalebox{0.8}{
    \begin{tikzpicture}
      \pgfplotstableread[col sep=comma]{data/large-mutex-meet/plot6-long-nofails.csv}{\tablememhi}
      \begin{axis} [
        anchor=west,
        width=5cm,
        height=10cm,
        xbar=\pgflinewidth,
        bar width=4pt,
        xmin=0,
        grid=major,
        grid style={dashed,gray!30},
        ytick=data,
        yticklabels from table={\tablememhi}{file},
        y tick label style={font=\small},
        enlarge y limits={abs=0.6cm},
      ]
      \foreach \a in {apinis,ours-bot,ours,base}{
        \addplot+ table [x=\a-memory, y expr=\coordindex] {\tablememhi};
      }
      \end{axis}
    \end{tikzpicture}}
  \end{minipage}
  \centering
  \begin{minipage}{0.8\textwidth}
  \caption{
    Memory footprint measured as the peak size of \textsc{OCaml}'s major heap throughout the execution. 
  }
  \label{fig:memory-multi-threaded}
  \end{minipage} %
  \end{minipage}
  \begin{minipage}[b]{0.35\textwidth}
  \scalebox{0.8}{
    \centering
    \pgfplotstableread[col sep=comma]{data/large-mutex-meet/plot7-dead.csv}{\tabledeadn}
    \begin{tikzpicture}
      \begin{axis}[
        width=5cm,
        height=22cm,
        grid=major,
        grid style={dashed,gray!30},
        xmin = 0,
        xmax = 1,
        xlabel={fraction of procedure\\entries set to $\bot$},
        xlabel style={align=center},
        xticklabel=\pgfmathparse{100*\tick}\pgfmathprintnumber{\pgfmathresult}\,\%,
        xticklabel style={/pgf/number format/.cd,fixed,precision=2},
        enlarge y limits={abs=0.6},
        xbar stacked,
        bar width=7pt,
        ytick=data,
        yticklabels from table={\tabledeadn}{file},
        y tick label style={font=\small},
        legend style={at={(0,-50pt)},nodes={scale=0.9},anchor=north, font=\small},  
      ]
      \addplot+ table[y expr=\coordindex, x expr=\thisrow{stillbot}/\thisrow{total}] {\tabledeadn};
      \addplot+ [fill=red] table[y expr=\coordindex, x expr=(\thisrow{botified}-\thisrow{stillbot})/\thisrow{total}] {\tabledeadn};
      \legend{,{intermittently trash}},
      \end{axis}
    \end{tikzpicture}}
    \begin{minipage}{\textwidth}
    \caption{Fraction of all pairs of procedure entry and context identified as trash.
    Pairs identified as trash intermittently,
    but are not trash in the final result are highlighted in red.}
    \label{fig:dead-multi-threaded}
    \end{minipage}
  \end{minipage}
\end{figure}


\paragraph{Threats to validity.}
The SV-COMP suite is an established benchmark for static analysis,
yet opinions on the generalizability to real-world programs differ.
Therefore, the evaluation is supplemented with experiments on larger, real multi-threaded applications. 
The experiments are performed with one static analysis tool with specific widening operators and
flavors of context-sensitivity.
However, the analyses employ commonly used domains such as intervals and points-to sets.
Thus, we expect the results to be indicative of the impact on other mixed flow-sensitive analyzers.

\section{Related Work}
\label{sec:relatedwork}
We first describe general techniques for improving precision, then turn to
frameworks for mixed-flow sensitivity, and lastly discuss other notions of abstract garbage collection.

\paragraph{General techniques.}
Widening and narrowing have been proposed by \citet{cousot1977,cousot1992} as general techniques
to speed up fixpoint iteration for abstract interpretation over domains with infinite ascending and descending
chains.
A variety of dedicated widening and narrowing operators as well as general techniques have been proposed to improve precision in abstract interpretation.
Here, we only mention the contributions that are most relevant for our setting.
In the context of the static analyzer \textsc{Astree}, \citet{blanchet2003} propose to delay widening
to obtain more precise results.
Arbitrary increase between widenings as we allow for reluctant widening is not considered.
Likewise, the notion of \emph{strong} widenings is new.
\citet{halbwachs2012} observe that, due to the inherent non-monotonicity of constraint solving with widening,
larger start values may lead to smaller or incomparable post fixpoints.
They propose solving the constraint system multiple times.
Each time,
the starting value is a modified version of the result of the prior solve.
Such restarting also has been advocated by \citet{amato2016}.
While these techniques work with a given constraint system, other approaches pioneered by trace partitioning \cite{Mauborgne05,Rival07} refine the constraint system.
Such a refinement of the unknowns or the domain allows keeping more information apart.
Trace partitioning has later been generalized to \emph{views} by \cite{KimRR18}, as well as to concurrency-sensitivity \cite{SchwarzE24}.
These techniques are orthogonal to ours and may be combined with our techniques to further
improve the precision -- at the price, though, of perhaps incurring an extra loss in efficiency.

\paragraph{Frameworks.}
Several analysis frameworks for imperative or object-oriented languages
perform a flow-insensitive \emph{points-to} analysis
in the style of \citet{andersen1994} or \citet{steensgaard1996} as the basis of more advanced analyses
of the program.
In this way, the \textsc{Java} analysis frameworks \textsc{Soot} \cite{lam2011}
and \textsc{FlowDroid} \cite{arzt2014} rely
on pointer analyses provided by \textsc{Spark} \cite{lhotak2003}, while \textsc{SootUp} \cite{karakaya2024}
relies on \textsc{QiLin} \cite{he2022a,he2022}.
\citet{CaiZ23} propose a call graph construction combining flow-insensitive points-to analysis with a flow-sensitive refinement.
Other analyses perform some flow-insensitive pre-analysis to, e.g., tune context-sensitivity
\cite{hassanshahi2017efficient}. 
%
%
The drawback of this approach is that the single abstraction for the heap is accumulated from the statements
of the program irrespective of whether these are reachable or not.
Analyzers such as \textsc{Astree} \cite{blanchet2003}, \textsc{Mopsa} \cite{monat2021},
\textsc{Frama-C} \cite{baudin2021}, or \textsc{Goblint} \cite{vojdani2016}
therefore perform their analyses of pointers on the fly alongside with other analyses.
Side-effecting constraint systems have been proposed by \citet{vene2003} for conveniently expressing
modular analyses of multi-threaded code where threads are analyzed flow-sensitively while
at the same time, information about shared data is collected flow-insensitively.
The potential of that approach was further elaborated in \cite{apinis2012}.
%
To increase efficiency at the expense of precision, \textsc{Goblint} also offers the option
to analyze global data-structures flow-insensitively even when the program is single-threaded.
This possibility is naturally supported by side-effecting constraint systems and particularly
useful when analyzing programs incrementally \cite{erhard2024a},
which amounts to a partial application of store widening~\cite{HornM10}. 
This idea also is employed by \citet{nicolay2019}.

An attempt of retaining precision during flow-insensitive accumulation of abstract values is proposed by
\citet{apinis2013}. 
They present a local solver \slrp\ which supports widening and narrowing and
also side-effects to unknowns triggered by the evaluation of right-hand sides of constraints.
Side-effect contributions to $y$ from some unknown $x$ are tracked via a helper unknown $(x,y)$
where a dedicated set maintains for $y$ the set of unknowns which have produced side-effects to $y$ so far.
The combined \emph{warrowing} operator then is applied only after the join of the individual contributions to $y$.
Withdrawal of contributions is not considered.
In contrast, our new update rule does not introduce auxiliary unknowns. It is generic in making
only little assumptions on the hosting solver. It performs widening and
narrowing per origin, it allows withdrawing outdated contributions and supports abstract garbage collection.
Another approach for dealing with globals is followed by Antoine Min\'e in \cite{mine2012,mine2014}.
When analyzing concurrent programs, Min\'e flow-insensitively collects \emph{interferences} between threads,
discovered during a flow-sensitive analysis of threads. An outer fixpoint computation then is performed until the set
of interferences stabilizes. This approach suggests an alternative fixpoint engine for combinations of
flow-sensitive with flow-insensitive analyses. Issues like withdrawal of contributions or abstract
garbage collection, however, are not discussed.
\citet{StievenartNMR19} also accumulate flow-insensitive information about global variables (via so-called \emph{effects})
during a flow-sensitive abstract interpretation of a multi-threaded system. Like the work by Min\'e, they rely on
a nested fixpoint formulation. This framework assumes finite lattices, and thus questions about widening and narrowing,
withdrawal of contributions or abstract garbage collection in our sense do not arise.
\citet{EsPSR20} report on the design of \textsc{MAF}, a framework which encompasses this analysis of multi-threaded code
as well as the analysis by \citet{nicolay2019}, and the encountered engineering challenges.
Recently, \citet{keidel2024} have
formalized the \emph{blackboard architecture} used by the tool \textsc{Opal} \cite{afonso2016,helm2020,roth2021}.
This framework considers a global store of \emph{entities} with \emph{kinds}.
Each entity and kind is mapped to a property from some lattice corresponding to the kind.
A collection of monotonic update functions may query the store and produce contributions which are added
to the current store by \emph{join}. The analyzer is meant to compute a fixpoint, i.e.,
a store subsuming some initial store which is invariant under updates.
In our terminology, this framework deals with globals only.
Control-flow must be encoded by tagging entities, e.g., with program points and
designing the update functions appropriately.
Examples of analyses expressible in this framework are
a pointer analysis which mutually depends on a call graph analysis,
a reflection analysis which reuses the pointer analysis and on top of that,
a field and object immutability analysis.
\citet{keidel2024} assume monotonic constraints and analysis domains of finite height ---
assumptions which are not met by more expressive abstract domains.
Widening, narrowing, or withdrawal of outdated contributions are not considered.

\paragraph{Abstract garbage collection.}
\citet{mangal2014} remark that
context-sensitive interprocedural analysis may analyze procedures for contexts
which in the end do not contribute to the final result. No on-the-fly method, though,
is provided to collect the corresponding trash.
No general distinction between globals and locals is introduced. Accordingly, also no
general framework for accumulation at globals and withdrawal of contributions is provided.
Building on previous work by \citet{MightS06}, in \cite{vanEs2019}, abstract reference counting is used to identify abstract garbage
within an abstract store maintained by an abstract interpreter.
These stores, however, are analyzed flow-sensitively and thus correspond to what we call locals.
Their version of abstract garbage collection thus tries to improve \emph{within} abstract representations
of local program states. As far we can see, globals in our sense are not of concern.
Neither widening nor narrowing are explicitly dealt with. Their work has recently also found applications in other analyzers \cite{MonatOM20}.
Other recent work \cite{JohnsonSEMH14,Germane020} also relies on the notion of abstract garbage collection proposed by \citet{MightS06}.
A different form of garbage collection inside abstract values is provided by \citet{he2023} for the IFDS framework
for interprocedural analysis \cite{reps1995}. In that framework, interprocedural analysis is dissolved into the propagation
of flow facts across an \emph{exploded supergraph}. Garbage collection here aims at removing edges within that graph
which have become irrelevant.
While in spirit related to our ideas, the technique is only applicable to a restricted class of analyses and thus not as generic as ours.
Also, it is not clear how mixed flow-sensitivity can be supported.
%
%
The \textsc{Goblint} system supports mixed flow-sensitive analyses, but so far has not supported abstract garbage collection during the analysis.
As it uses the top-down solver \tdside{} as its standard fixpoint engine,
the values for irrelevant unknowns need not satisfy the constraint system \cite{fecht1999,seidl2021}
and are only removed in a post-processing phase.
Their possible contributions to globals, however, are not withdrawn.



\section{Conclusion}
\label{sec:conclusion}
We have presented update rules to enhance the precision of mixed flow-sensitive analyses, where global unknowns are treated flow-insensitively.
They apply widening and narrowing to the individual contributions of locals.
Our more sophisticated update rules apply widening reluctantly, i.e., only if a new contribution is not subsumed by existing contributions.
We enhanced our approach with a form of abstract garbage collection to take out the toxic trash.
We have compared our more sophisticated update rules as well as an update rule extracted from earlier work specific to one solver to the treatment of globals as provided by the \Goblint{} framework.
We found for context-insensitive analyses that our new update rule results in improvements for half of the ReachSafety category of the SV-COMP benchmark suite over \Goblint's default rule, with a moderate impact on runtime.
For context-sensitive analyses, an overwhelming majority of contexts can be collected as abstract garbage during the analysis run.
The price for that is an increase in runtime by a factor of less than 3.
Future work may experiment with combining the update rules presented here with advanced widening strategies such as
delayed widening and evaluate their impact.
How to extend the techniques to collecting \emph{cyclic} abstract garbage is still to be investigated.
One may explore whether techniques
from the garbage collection literature 
can be applied to collect such forms of trash.
Further, for incremental mixed flow-sensitive analyses, the new techniques for abstract garbage collection may also help withdraw contributions from outdated code.

\begin{acks}
  We thank the anonymous reviewers for their valuable feedback, which was instrumental in polishing the paper.
DeepSeek was used to help with the creation of scripts for extracting precision data from logfiles.
This work was supported in part by
Deutsche Forschungsgemeinschaft (DFG) -- 378803395/2428 \textsc{ConVeY}.

\end{acks}

\section*{Data Availability}
A virtual machine that allows for reproduction of our experiments is available on Zenodo~\cite{artifact}.
It contains the source code and binaries of \textsc{Goblint}, the benchmark programs, and scripts
to reproduce our results and figures.

\bibliographystyle{ACM-Reference-Format}
\bibliography{ref}

\appendix
\clearpage

\section{Constraint Systems for the Factorial Program}\label{app:fac}

Consider the factorial program from \cref{lst:fact} now equipped with a numbering of its program points
which we use when constructing the constraint system for interprocedural analysis:

\setlength{\abovecaptionskip}{0em}
\setlength{\belowcaptionskip}{0.2\baselineskip}
\begin{center}
\begin{lstlisting}[style=ccode, xleftmargin=5.0ex,caption={\label{lst:fact-1}Factorial program.},numbers=none]
int t = 1;
void fac(int i) {
/* 0 */  if (i > 0) {
/* 1 */     fac(i-1);
/* 2 */     t = i * t;
  }
/* 3 */  else assert(i == 0);
/* 4 */
}
void main() {
/* 5 */  int i = 17;
/* 6 */  fac(i);
/* 7 */
}
\end{lstlisting}
\end{center}

For the analysis, we use global store widening selectively for variable \lstinline{t}, i.e., track the abstract value for
\lstinline{t} flow-insensitively.
Since there is also just the single local program variable \lstinline{i},
it suffices to maintain single intervals for both local and global unknowns.
The constraint system for context-sensitive analysis of the program as outlined in \cref{sec:background}
is given by:
\[
\begin{array}{llll}
(A): \qquad\qquad\qquad & (\sigma\,\_\text{main},\rho)	&\sqsupseteq&	(\sigma\,(7,\top),\{(\textsf{t},[1,1]), ((5,\top),\top)\})\\

(B): & (\sigma\,(6,\top),\rho)	&\sqsupseteq&	\textbf{if}\;\rho\,(5,\top)=\bot\;\textbf{then}\;(\bot,\emptyset)\\
& 			&&		\textbf{else}\;([17,17],\emptyset)		\\
(C): & (\sigma\,(7,\top),\rho)	&\sqsupseteq&	\textbf{if}\;\sigma\,(6,\top)=\bot\;\textbf{then}\;(\bot,\emptyset)\\
&			&&		\textbf{else}\,\begin{array}[t]{l}
					\textbf{let}\;d'=\sigma\,(6,\top)\;\textbf{in} \\
					\textbf{let}\;d'' =  \begin{array}[t]{l}
						\textbf{if}\;\sigma\,(4,d') =\bot\;\textbf{then}\;\bot\\
						\textbf{else}\;\sigma\,(6,\top)\;\textbf{in}
						\end{array}\\
					(d'', \{((0,d'),d')\})
					\end{array}\\[4ex]
& && \\
(D1): & (\sigma\,(1,d),\rho)	&\sqsupseteq&	(\rho\,(0,d)\sqcap[1,\infty], \emptyset)	\\
(E): &(\sigma\,(2,d),\rho)	&\sqsupseteq&	\textbf{if}\;\sigma\,(1,d) = \bot\;\textbf{then}\;(\bot,\emptyset)\\
 &			&&		\textbf{else}\,\begin{array}[t]{l}
					\textbf{let}\;d'=\sigma\,(1,d)\,{-^\sharp}\,[1,1]\;\textbf{in} 	\\
					\textbf{let}\;d'' = \begin{array}[t]{l}
						\textbf{if}\;(\sigma\,(4,d') =\bot)\;\textbf{then}\;\bot\\
						\textbf{else}\;\sigma\,(1,\top)\;\textbf{in}
						\end{array}\\
					(d'', \{((0,d'),d')\})
					\end{array}	\\
(D2): & (\sigma\,(3,d),\rho)	&\sqsupseteq&	(\rho\,(0,d)\sqcap[-\infty,0],\emptyset)	\\
(F): & (\sigma\,(4,d),\rho)	&\sqsupseteq&	(\sigma\,(2,d),\{(t,\sigma\,(2,d) \,{*^\sharp}\,\rho\,t)\})\,
					\sqcup\,	(\sigma\,(3,d),\emptyset)
\end{array}
\]
The letters on the left-hand side do not form part of the constraints system and are used to label the constraints, so they can be referenced in the following explanations.
For convenience, we have numbered the program points consecutively
where $0$ and $5$ represent the start points $\textsf{st}_{\text{fac}}$ and $\textsf{st}_{\text{main}}$ of the
procedures \lstinline{fac} and \lstinline{main}, respectively, while
$4$ and $7$ represent their return points.
$\top = [-\infty,\infty]$ and $\bot=\emptyset$ represent the top and bottom elements of the interval domain,
while $d$ ranges over arbitrary non-empty intervals, and $-^\sharp, *^\sharp$ are the abstract operators for
subtraction and multiplication of intervals, respectively.
Thus, the set of globals consists of \lstinline{t} together with $(0,\top)$ and $(5,d)$ ($d$ an interval)
for the start points of procedures \lstinline{main} and \lstinline{fac}, respectively.
%

In detail, the unknowns $(5,\top),(6,\top),(7,\top)$ correspond to the program points of procedure
\lstinline{main} in context $\top$, where
\begin{itemize}
\item[(A)]	The constraint for $\_\text{main}$ asks for the return value of the procedure \lstinline{main}
	for calling context $\top$ while contributing the side-effect $\top$ to its start point
	for the same calling context; additionally, the initial abstract value $[1,1]$ is contributed to
	the unknown for \lstinline{t};
\item[(B)]	The constraint for $(6,\top)$ checks whether its control-flow predecessor $(5,\top)$ is reachable.
	If so, the abstract value (of \lstinline{i}) is set to $[17,17]$;
\item[(C)]	The constraint for $(7,\top)$ models the abstract effect of the call $\text{fac}(i)$.
	If the control-flow predecessor $(6,\top)$ is reachable,
	its local state $d'$ (i.e., the value of \lstinline{i} before the call) is contributed to
	$(0,d')$, i.e., the start point of \lstinline{fac} in context $d'$,
	where the local state after the call either is $\bot$ (if
	the call returns $\bot$) or equal to the abstract state before the call.
\end{itemize}
For any calling context $d$, the unknowns $(0,d),\ldots,(4,d)$ correspond to the program points of procedure
\lstinline{fac} in context $d$. Note that due to the recursive calls, multiple such contexts might be
encountered during the analysis. In detail,
\begin{itemize}
\item[(D1/2)]	The constraints for $(1,d)$ and $(3,d)$ correspond to positive and negative guards
	checking whether the parameter \lstinline{i} exceeds 0 or not;
\item[(E)]	The constraint for $(2,d)$ corresponds to the other call of the procedure \lstinline{fac}
	where now for the actual parameter is \lstinline{i - 1}. Accordingly, the
	abstract value $d'$ of $\sigma(1,d)\,{-^\sharp}\,[1,1]$ for the control-flow predecessor $(1,d)$ is
	contributed to the unknown $(0,d')$, i.e., the start point of \lstinline{fac} in calling context $d'$.
	Again, the local state after the call either is $\bot$ (if
	the call returns $\bot$) or equal to the abstract state before the call;
\item[(F)]	The right-hand side of the constraint for $(4,d)$ is the join of the effects of two
	control-flow edges.
	The effect for the edge from $(2,d)$ accounts for the update of the program variable \lstinline{t}
	by providing a contribution to \lstinline{t}, the abstract value
	$\sigma(2,d)\,{*^\sharp}\,\rho$\lstinline{t}.
	The effect for the edge from $(3,d)$ just returns the abstract value for $(3,d)$.
\end{itemize}
The constraint system for context-insensitive analysis of the factorial is constructed analogously --
with the only modification is that now the context component of each unknown equals the trivial context $\bullet$:
\[
\begin{array}{lll}
(\sigma\,\_\text{main},\rho)   	&\sqsupseteq&   (\sigma\,(7,\bullet),\{(\textsf{t},[1,1]), ((5,\bullet),\top)\})\\
(\sigma\,(1,\bullet),\rho)    	&\sqsupseteq&   (\rho\,(0,\bullet)\sqcap[1,\infty], \emptyset)      \\
(\sigma\,(2,\bullet),\rho)    	&\sqsupseteq&   \textbf{if}\;\sigma\,(1,\bullet)=\bot\;
						\textbf{then}\;(\bot,\emptyset)\\
				&&		\textbf{else}\,\begin{array}[t]{l}
					\textbf{let}\;d' = \sigma\,(1,\bullet)\,{-^\sharp}\,[1,1]\;\textbf{in}     \\
	                                \textbf{let}\;d'' = \begin{array}[t]{l}
                                                \textbf{if}\;(\sigma\,(4,\bullet) =\bot)\;\textbf{then}\;\bot\\
                                                \textbf{else}\;\sigma\,(1,\bullet)\;\textbf{in}
                                                \end{array}\\
	                                (d'', \{((0,\bullet),d')\})
					\end{array}	\\
(\sigma\,(3,\bullet),\rho)    	&\sqsupseteq&   (\rho\,(0,\bullet)\sqcap[-\infty,0],\emptyset)      \\
(\sigma\,(4,\bullet),\rho)    	&\sqsupseteq&   (\sigma\,(2,\bullet),\{(t,\sigma\,(2,\bullet)\,{*^\sharp}\,\rho\,t)\})
	                        \;\sqcup\;      (\sigma\,(3,\bullet),\emptyset)	\\
(\sigma\,(6,\bullet),\rho)	&\sqsupseteq&	\textbf{if}\;\rho\,(5,\bullet)=\bot\;
						\textbf{then}\;(\bot,\emptyset)\\
				&&		\textbf{else}\;([17,17],\emptyset)		\\
(\sigma\,(7,\bullet),\rho) 	&\sqsupseteq&   \textbf{let}\;d' = \sigma\,(6,\bullet)\;\textbf{in} \\
                        	&&              \textbf{if}\;d'=\bot\;\textbf{then}\;(\bot,\emptyset)\\
			        &&              \textbf{else} \,\begin{array}[t]{l}
					\textbf{let}\;d'' =  \begin{array}[t]{l}
					\textbf{if}\;(\sigma\,(4,\bullet) =\bot)\;\textbf{then}\;\bot\\
					\textbf{else}\;\sigma\,(6,\bullet)\;\textbf{in}
                                        \end{array}\\
	                              (d'', \{((0,\bullet),d')\})
				      \end{array}
\end{array}
\]
The resulting constraint system consists of finitely many constraints only where
the set of globals is given by \textsf{t}, $(0,\bullet)$ and $(5,\bullet)$.

\section{Examples for infinite W/N switches for update rule \ref{lst:update_globals_warrow_imp}}\label{appendix:warrowing-apinis-nonterm}

Consider again \cref{ex:nonterm}.
In this example, the same contribution to the unknowns $a$ and $b$ is triggered
twice in a row.
A program for which the analysis may result in such a behavior is given in \cref{lst:program_warrowing_nonterm}.

\begin{lstlisting}[style=ccode,label=lst:program_warrowing_nonterm,caption={Program that may cause nontermination for update rule \ref{lst:update_globals_warrow_imp}.}]
void thread1(){
  while(true) {
    u = a;
    a = b + 1;
  }
}

void thread2(){
  while(true){
    v = b;
    b = a + 1;
  }
}
\end{lstlisting}

While this causes a narrowing to be applied in update rule
\ref{lst:update_globals_warrow_imp}, the update rule \ref{lst:apinis-simple_imp}
does not perform widening and narrowing unless the contribution has changed.
Thus, we modify the constraint system from \cref{ex:nonterm} as follows:
\[
    \begin{array}{lll}
        (\sigma\,x,\rho) & \sqsupseteq & (\rho\,a, \{
             a \mapsto \textbf{if } (\rho\,a ) = \infty \textbf{ then } (\rho\,b) + 2 \textbf{ else } (\rho\,b) + 1 \})\\
        (\sigma\,y,\rho) & \sqsupseteq & (\rho\,b, \{
             b \mapsto \textbf{if } (\rho\,b ) = \infty \textbf{ then } (\rho\,a) + 2 \textbf{ else } (\rho\,a) + 1 \})
    \end{array}
\]
This modified constraint system remains monotonic, but now when $a$ gets widened to $\infty$, the contribution caused
by $x$ upon re-evaluation is different from the contribution in the last iteration, which causes narrowing to be applied.
Through a further iteration narrowing, the value then gets narrowed back down to $(\rho\,b) + 1$.
The sequence thus is the same one as obtained with update rule \ref{lst:update_globals_warrow_imp} for the unmodified
constraint system from \cref{ex:nonterm} with the difference that one more intermediate value is attained in each round.

\section{Example where reluctant does not ensure finite updates}\label{appendix:reluctant-nonterm}
This section provides an example where the update rule using reluctant widening does not ensure
a finite number of updates, unless the operator is strong.
Consider the domain $\Dom = (\mathbb{N} \cup \infty) \times \{ 0,1,2 \}$ where $\mathbb{N}$ are
the natural numbers. The domain uses a lexicographical ordering, where for each component, the ordering
follows the usual one for the natural numbers.
Since the order on $\Dom$ is total, the join can be defined as the maximum.
We use the following widening operator:
\[
    (a,b) \nabla (c,d) = \begin{cases}
        (\infty, 2) & \text{if }  \textsf{max}(b,d) = 2 \\
        (\textsf{max}(a,c), \textsf{max}(b,d) +1) & \text{otherwise}
    \end{cases}
\]
This operator is a widening, but not a \emph{strong} widening.
The second component of the domain can be thought of as a sort of \emph{widening gas} encoded in the domain.

Now assume we have unknowns $x$ and $y$ that make contributions to a global $g$, and that the reluctant variant of the update rule from
\cref{lst:update_globals_robust_warrowing_imp} is used.
The following table illustrates a pattern of updates to the unknown $g$ that may be infinitely prolonged.
The first column is the overall value of the global $g$.
The fourth and fifth columns indicate new contributions that are obtained when evaluating the right-hand sides of $x$ and $y$, respectively.
The second and third columns are the values stored in \lstinline|cmap| for the contributions by $x$ and $y$ to $g$, respectively.
Empty values in the second and third columns indicate an unchanged value.
\begin{center}
    \begin{tabular}{lll|lll}
        g & \lstinline|cmap[g][x]| & \lstinline|cmap[g][y]| & x contribution & y contribution & combination op
        \\
        \bottomrule
        \addlinespace
        0,0 & 0,0 & 0,0 & & & \\
        1,1 & 1,1 & & 1,0 & & $\nabla$ (x cont. $\sqsupset$ g) \\
        2,1 & & 2,1 & & 2,0 & $\nabla$ (y cont. $\sqsupset$ g) \\
        2,1 & 2,0 & & 2,0 & & $\sqcup$ (x cont. $\sqsubseteq$ g) \\
        3,1 & 3,1 & & 3,0 & & $\nabla$ (x cont. $\sqsupset$ g) \\
        3,1 & & 3,0 & & 3,0 & $\sqcup$ (y cont. $\sqsubseteq$ g) \\
        4,1 & & 4,1 & & 4,0 & $\nabla$ (y cont. $\sqsupset$ g) \\
        4,1 & 4,0 & & 4,0 & & $\sqcup$ (x cont. $\sqsubseteq$ g) \\
        5,1 & 5,1 & & 5,0 & & $\nabla$ (x cont. $\sqsupset$ g) \\
        5,1 & & 5,0 & & 5,0 & $\sqcup$ (y cont. $\sqsubseteq$ g) \\
        \multicolumn{6}{c}{...}
     \end{tabular}
\end{center}
In this example, the ability to join in values into the \lstinline|cmap| is used to reset the widening gas so that a
subsequent widening on the same component of \lstinline|cmap| will not go to the top value given by $(\infty, 2)$.

\section{Abstract Garbage Collection for the solver \tdside{}}\label{appendix:td3agc}
Consider again \cref{ex:wewanthisbeforthestrangetdsidestuff} from \cref{ss:garbage}.
For forward-propagating solvers, plugging in the enhanced update rule yields the desired result out of the box.

The situation is different for the local solver \tdside{}.
That solver avoids eager re-evaluation of unknowns affected by an unknown $x$ changing its value and instead only \emph{marks} all possibly affected unknowns as \emph{unstable}.
Such unstable unknowns will only be re-evaluated later in case they are queried again.
More concretely, consider the constraint corresponding to a call edge $e=(u,p,v)$:
\[
(\sigma\,(v,c'),\rho)	\;\sqsupseteq\;
	\begin{array}[t]{l}
	\textbf{let}\;(c,d) = \textsf{enter}_e^\sharp\,c'\,(\sigma\,(u,c'))\;\textbf{in}	\\
	\textbf{let}\;d' =\textsf{combine}_e^\sharp\,(\sigma\,(u,c'))\,(\sigma\,(\textsf{ret}_p,c))\;\textbf{in}\\
	(d',\{(\textsf{st}_p,c)\mapsto d\})
	\end{array}
\]
where we assume that the function $\textsf{enter}_e^\sharp:\Context\to\Dom\to(\Context\times\Dom)$
implements abstract passing of parameters.
It takes the preceding context $c'$ together with the abstract value before the call and
returns the new context $c$ for the called procedure $p$ together with entry state
for which $p$ is called.
Moreover, the function $\textsf{combine}_e^\sharp:\Dom\to\Dom\to\Dom$ takes the abstract value before the call
together with the abstract value returned for $p$ in context $c$ to construct a value for the endpoint of the edge $e$
in the caller's context $c'$.
If, due to larger values for the unknown $(u,c')$, the contribution to the global $(\textsf{st}_p, c)$ is withdrawn, also the value
$(\textsf{ret}_p, c)$  for the return point $\textsf{ret}_p$ in context $c$ will no longer be queried.
If the unknown $(\textsf{ret}_p, c)$ is no longer queried elsewhere by the solver,
the out-dated values of $(\textsf{ret}_p, c)$ as well as of all other unknowns $(v,c)$ are still
preserved together with their out-dated contributions to globals.

After having found the unknown $(\textsf{st}_p,c)$ to receive the value $\bot$,
we therefore let the solver \tdside{} not only \emph{destabilize} all unknowns possibly influenced by the update
(i.e., mark them for re-evaluation), but explicitly initiate re-evaluation of $(\textsf{ret}_p, c)$ ---
in the formulation of \tdside{} from \cite{erhard2024a} ---
by calling \lstinline{query} $(\textsf{ret}_p, c)$.
Due to the strictness of all constraints $\sem{e,c}^\sharp$ for edges $e$ in the control-flow predecessor
this re-evaluation will perform the abstract garbage collection
and set the values of $\sigma$ for all unknowns $(v,c), v\in N_p,$ to $\bot$.

\begin{example}
  Consider again \cref{ex:wewanthisbeforthestrangetdsidestuff} and assume that \tdside{} is used.
  As with other solvers, the abstract value for $k$ in line $10$ temporarily becomes $[0, \infty]$ due to widening on locals,
  and the call to the procedure $f$ is analyzed with the context $c$ where $k \mapsto [0, \infty]$.
  Later, narrowing on locals is performed by \tdside{} and the value of $k$ for line $10$ becomes $[0, 9]$. The call to $f$ is now analyzed in another context where $k \mapsto [0, 9]$.
  Using the update rule with abstract garbage collection, the previous non-bottom contribution to $(\textsf{st}_f, c)$ is withdrawn.
  To ensure that $\bot$ is propagated for program points of the procedure $f$ in the context $c$, the unknown  $(\textsf{ret}_f, c)$ needs to be queried.
  This way, the recursive \tdside{} solver will descend into evaluating the procedure $f$ in the context $c$ again.
  As the abstract value of $(\textsf{st}_f, c)$ has been set to $\bot$, the $\bot$ value is propagated to the other unknowns consisting of nodes in $f$ and the context $c$.
  Then, the assertion \lstinline|a == 0| can be shown.
\end{example}

\section{Description of benchmarks for (RQ5)}\label{appendix:benchmarks}

The benchmarks used for answering \ref{rq:bot} are multi-threaded real-world {C} programs from literature.
Here, we provide a description of the benchmark sets and report on those benchmarks where not all approaches terminated successfully.

The first set was used to benchmark multi-threaded value analyses~\cite{schwarz2021,schwarz2023b} in the \textsc{Goblint} static analyzer
and consists of six \textsc{Posix} programs as well seven Linux device drivers which where pre-processed by the LDV toolchain~\cite{DBLP:journals/pcs/ZakharovMMNPK15}.
These tasks are part of the \texttt{c/ldv-linux-3.14-races} subset of the \textsc{ConcurrencySafety-Main} category of the \textsc{SV-COMP} benchmark suite~\cite{Beyer24}.
\Cref{tab:benchmarks} lists the benchmarks, gives a short description, and reports the number of lines of the source file (Lines) as well as the logical lines of code (LLoC)
obtained by only counting lines that contain executable code (thus, excluding not only comments and empty lines, but also struct definitions or typedefs).

\begin{table}
    \centering
    \caption{Benchmarks used to evaluate \textsc{Goblint}~\cite{schwarz2021,schwarz2023b}.}
    \label{tab:benchmarks}
        \footnotesize
        \begin{tabular}{lrrl}%
            \multicolumn{4}{l}{\textsc{Posix} Programs}\\
            \bottomrule
            \addlinespace
            Name  & Lines & LLoC & Description \\
            \addlinespace
            \midrule
            \benchmarkprog{pfscan} & 1295 & 562 & Parallel file scanner \\
            \benchmarkprog{aget} & 1280 & 587 & Multi-threaded HTTP download accelerator \\
            \benchmarkprog{ctrace} & 1407 & 657 & C Tracing library sample program \\
            \benchmarkprog{knot} & 2255 & 981 & Multi-threaded webserver \\
            \benchmarkprog{ypbind} & 6588 & 992 & Linux NIS binding process \\
            \benchmarkprog{smptrc} & 5787 & 3037 & SMTP Open Relay Checker \\
            \addlinespace
            \addlinespace
            \multicolumn{4}{l}{\textsc{Linux Device Drivers} (After processing by LDV toolchain~\cite{DBLP:journals/pcs/ZakharovMMNPK15})}\\
            \bottomrule
            \addlinespace
            Name  & Lines & LLoC & Devices\\
            \addlinespace
            \midrule
            \benchmarkprog{iowarrior} & 7687 & 1345 & IOWarrior chips for I/O via USB from Code Mercenaries \\
            \benchmarkprog{adutux} & 8114 & 1520 & ADU devices for I/O from Ontrak Control Systems \\
            \benchmarkprog{w83977af} & 10071 & 1501 & Winbond W83977AF Super I/O chip for data transmission in noisy environments  \\
            \benchmarkprog{tegra20} & 7111 & 1547 & Nvidia's Tegra20/Tegra30 SLINK Controller (for chip-to-chip communication) \\
            \benchmarkprog{nsc} &12778  & 2379 & NSC PC'108 and PC'338 IrDA chipsets (for infrared communications) \\
            \benchmarkprog{marvell1} & 12246 & 2465 & CMOS camera controller in Marvell 88ALP01 chip \\
            \benchmarkprog{marvell2} & 12256 & 2465 & CMOS camera controller in Marvell 88ALP01 chip \\
            \bottomrule
        \end{tabular}
\end{table}

The second, larger, set of benchmarks was assembled by \citet{DBLP:conf/icse/HongR23} to evaluate \textsc{Concrat}, which is a tool aimed at the automatic translation of {C} programs to Rust.
The benchmarks were collected by searching public repositories on GitHub which have at most 500~kB of C code, use the \textsc{pthread} locking facilities, have at least 1000 stars and are not intended for educational purposes.
Additionally, only programs which can be translated by the \textsc{C2Rust} tool were considered.
The set comprises 46 tasks: Out of those, 5 lack a main procedure and had to be excluded as \Goblint\ targets whole program analysis. 5 further programs were excluded as they are statically known to be single-threaded.
This leaves 36 benchmarks, which are listed in \cref{tab:benchmarks2} along with their characteristics.
Four programs are marked with a $\star$: For these programs, the \textsc{Goblint} maintainers identified issues with how the programs were merged from multiple files to obtain a single input
file --- we use the fixed version provided by the \textsc{Goblint} maintainers here.

\begin{table}
    \centering
    \caption{Benchmarks from the \textsc{Concrat}~\cite{DBLP:conf/icse/HongR23} suite. A $\star$ indicates a fixed version as maintained in the \textsc{Goblint} benchmark repository.}
    \label{tab:benchmarks2}
        \footnotesize
        \begin{tabular}{lrrl}%
            \multicolumn{4}{l}{}\\
            \bottomrule
            \addlinespace
            Name  & Lines & LLoC & Description\\
            \addlinespace
            \midrule
            \benchmarkprog{AirConnect} & 17954 & 7512 & Bridge between AirPlay devices and UPnP speakers \\
            \benchmarkprog{axel} & 6004 & 2716 & Download accelerator \\
            \benchmarkprog{brubeck}& 5879 & 2240 & Statistics aggregator \\
            \benchmarkprog{C-Thread-Pool} & 749 & 241 & Minimal \textsc{Posix} threadpool implementation \\
            \benchmarkprog{cava} & 4858 & 2011 & Cross-platform Audio Visualizer            \\
            \benchmarkprog{clib} & 25773 & 11090 & Package manager for C \\
            \benchmarkprog{dnspod-sr} $\star$ & 9473 & 4698 & Recursive DNS Server \\
            \benchmarkprog{dump1090} & 4777 & 2079 & Decoder for Software Defined Radio \\
            \benchmarkprog{EasyLogger} & 2140 & 839 & High-performance C log library \\
            \benchmarkprog{fzy} & 2765 & 1077 & Fuzzy finder for the terminal\\
            \benchmarkprog{klib} & 736 & 293 & Lightweight C library            \\
            \benchmarkprog{level-ip} & 5699 & 2452 & A userspace TCP/IP stack\\
            \benchmarkprog{libaco} & 1302 & 667 & C asymmetric coroutine library \\
            \benchmarkprog{libfaketime} & 528 & 143 & Modifies the system time for a single application \\
            \benchmarkprog{libfreenect} & 646 & 245 & Drivers and libraries for the Xbox Kinect device\\
            \benchmarkprog{lmdb} & 11021 & 5748 & Memory-Mapped Database \\
            \benchmarkprog{minimap2} & 17596 & 9081 & Aligner for DNA or mRNA sequences~\cite{DBLP:journals/bioinformatics/Li18}\\
            \benchmarkprog{Mirai-Source-Code}\tablefootnote{Abbreviated to \benchmarkprog{mirai} in plots.} $\star$ & 1876 & 820 & Mirai malware \\
            \benchmarkprog{nnn} & 12293 & 6712 & Terminal file manager \\
            \benchmarkprog{phpspy} & 19695 & 9551 & Sampling PHP profiler \\
            \benchmarkprog{pianobar} & 11663 & 4382 & Console-based music streaming player \\
            \benchmarkprog{pigz} $^\star$ & 9232 & 5014 & Parallel implementation of gzip compression alogrithm \\
            \benchmarkprog{pingfs} & 2403 & 913 & Filesystem storing information in ICMP ping packets \\
            \benchmarkprog{ProcDump-for-Linux}\tablefootnote{Abbreviated to \benchmarkprog{ProcDump} in plots.} & 4220 & 2157 & Linux version of ProcDump \\
            \benchmarkprog{Remotery} & 7562 & 3531 & CPU/GPU profiler with Remote Web View \\
            \benchmarkprog{shairport} & 8902 & 3791 & AirPlay audio player for Linux  \\
            \benchmarkprog{siege} & 19880 & 9239 & Load tester for HTTP servers \\
            \benchmarkprog{snoopy} & 3638 & 1938 & Library to log program executions  \\
            \benchmarkprog{sshfs} & 7451 & 3258 & Network filesystem client \\
            \benchmarkprog{streem} & 20803 & 9185 & Prototype stream-based programming language \\
            \benchmarkprog{sysbench} $\star$ & 16340 & 3575 & Database and system performance benchmark \\
            \benchmarkprog{the\_silver\_searcher} & 7396 & 3615 &\benchmarkprog{ack}-like code search \\
            \benchmarkprog{uthash} & 822 & 476 & Utilities for working with hashtables in C \\
            \benchmarkprog{vanitygen} & 11163 & 5160 & Vanity address generator for Bitcoin \\
            \benchmarkprog{wrk} & 8883 & 3747 & Load tester for HTTP servers \\
            \benchmarkprog{zmap} & 17908 & 7183 & Network scanner~\cite{durumeric2013zmap} \\
            \bottomrule
        \end{tabular}
\end{table}

For the benchmarks from the first suite, all approaches terminated successfully. For the second suite, there were some outliers:
\begin{itemize}
    \item For \benchmarkprog{axel}, only the baseline terminated within 900~s.
    \item For \benchmarkprog{pigz}, \benchmarkprog{sshfs}, and \benchmarkprog{minimap}, none of the configurations terminated within 900~s.
    \item For \benchmarkprog{lmdb}, \benchmarkprog{remotery}, \benchmarkprog{streem}, \benchmarkprog{the-silver-searcher}, and \benchmarkprog{airConnect},
            all configurations encountered a stack overflow. The problem seems to be caused by a deep recursion in the programs.
            While techniques such as a $k$-callstring or context gas~\cite{erhard2024} may help mitigate this issue, such considerations are orthogonal to the contributions in this paper.
    \item For \benchmarkprog{phpspy}, the \textbf{apinis} configuration timed out, while all others encountered a stack overflow.
    \item For \benchmarkprog{clib}, the \Goblint\ analyzer terminated with an exception claiming the program to be ill-typed for all configurations. This likely is due to a bug in \Goblint.
\end{itemize}
We have excluded these cases from the evaluation in \cref{sec:evaluation}.

\end{document}
\endinput